\documentclass[conference,compsoc]{IEEEtran}
\usepackage[T1]{fontenc}
\usepackage[nocompress]{cite}
\usepackage{amsmath,amssymb,amsthm,booktabs,array,graphicx,url}
\usepackage[hidelinks]{hyperref}
\hypersetup{pdftitle={Logit Leakage through Floating-Point Bias Addition},pdfauthor={Shaomin Chen, Xiaorui Gong}}
\newtheorem{theorem}{Theorem}
\newtheorem{proposition}{Proposition}

\newcounter{paperalgorithm}
\newcommand{\R}{\mathbb R}
\newcommand{\fl}{\operatorname{RN}_{32}}
\newcommand{\softmax}{\operatorname{softmax}}
\newcommand{\argmax}{\operatorname*{arg\,max}}
\graphicspath{{figures/}{./}}

\newif\ifreviewlinenumbers
\reviewlinenumbersfalse
\ifreviewlinenumbers
  \usepackage[switch]{lineno}
\fi
\begin{document}
\title{Logit Leakage through Floating-Point Bias Addition}
\author{\IEEEauthorblockN{Shaomin Chen\qquad Xiaorui Gong}}
\maketitle
\ifreviewlinenumbers\linenumbers\fi
\begin{abstract}
Language model APIs can expose token-specific biases while withholding logits and token probabilities. Existing bias-based extraction recovers relative logits, leaving a common additive component undetermined in ideal arithmetic. We identify a numerical side channel in floating-point bias addition that exposes information about this component through token identities alone. The channel distinguishes score vectors with identical complete unperturbed log probabilities. Our method exploits changes in representable spacing at binary exponent boundaries. Under the stated binary32 arithmetic assumptions and a known reachable score interval, it recovers a logit pair and its unknown tie-breaking priority exactly in at most 52 token queries. Given complete relative logits, one recovered anchor determines the original logit vector. We evaluate recovery across 11 public checkpoints from seven model families and extend the local method with adaptive interval search and anchor selection. In end-to-end experiments with vLLM, SGLang, and llama.cpp, adaptive recovery succeeds on all 128 Qwen3-0.6B inputs per system, including 64 held-out inputs. A separate token-only experiment extracts a pretrained small model's complete output-head representation up to an invertible hidden-basis transformation. We also establish a sufficient condition for suppressing common-shift disclosure by subtracting the maximum before bias addition. In controlled experiments, this mitigation eliminates all observed distinctions between common-shift variants and preserves token decisions under the tested random biases.
\end{abstract}

\section{Introduction}
Language-model services expose configurable decoding interfaces while withholding internal logits and model parameters. Logit bias lets users steer generated text by increasing or decreasing the likelihood of specified tokens, even when the interface returns only the selected token~\cite{openrouter2026parameters}. Understanding what numerical information can be recovered through these controls is part of assessing the information exposed by a prediction interface.

Prediction interfaces support model extraction~\cite{tramer2016stealing}, including recovery of language-model output projections and their dimensions~\cite{carlini2024stealing,finlayson2024logits}. Morris et al.\ and Carlini et al.\ recover relative logits through token-specific biases and observed token choices~\cite{morris2023inversion,carlini2024stealing}. In exact real arithmetic, both softmax probabilities and biased-argmax responses are invariant to a common additive shift of the logit vector. These observations determine logit differences while leaving one additive component unidentified for each input.

We show that floating-point bias addition can expose this component through token identities alone. Floating-point spacing varies with magnitude, so adding a chosen bias can produce different rounded comparisons for logit vectors related by a common shift. We construct queries that distinguish such vectors even when their complete unperturbed log probabilities are identical. Under explicit arithmetic and range conditions, we use these differences to recover original-logit anchors. Combined with complete relative logits, an anchor determines the original logit vector, including its common additive component.

Figure~\ref{fig:overview} summarizes the mechanism and recovery. We address three technical challenges:
\begin{itemize}
\item \textbf{Identifying leakage beyond relative logits.} Common-shift score vectors produce identical ideal query responses. Establishing additional disclosure requires floating-point queries that distinguish these vectors while their complete unperturbed probability outputs remain identical.
\item \textbf{Recovering original logits from token responses.} A token comparison combines score gaps, rounding, and an unknown tie-breaking order. Bounded biases restrict the accessible numerical regions, so recovery requires queries that separate these effects.
\item \textbf{Suppressing leakage while preserving bias control.} Removing dependence on the common additive component can change rounded comparisons and token choices. A mitigation must retain token-specific bias control and be evaluated for both channel suppression and preservation of decoding behavior.
\end{itemize}

\begin{figure*}[t]
\centering\includegraphics[width=\textwidth]{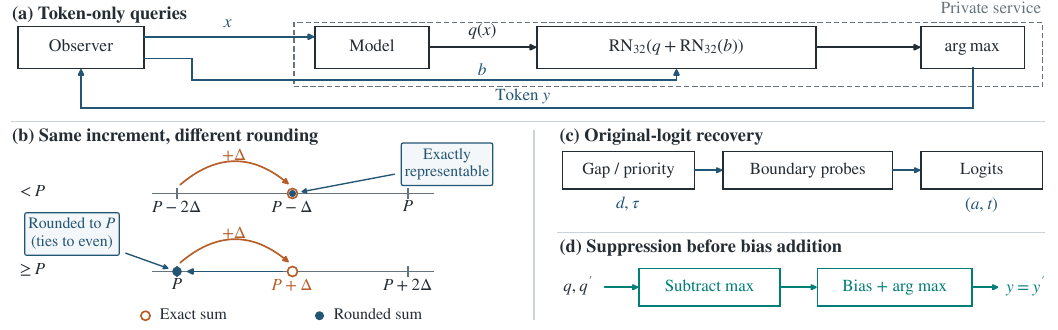}
\caption{Overview of the numerical channel and recovery. (a) The observer chooses token biases $b$ for fixed input $x$ and observes the selected token; $\mathrm{RN}_{32}$ denotes binary32 rounding. (b) At the positive boundary $P=2^e$, spacing increases from $\Delta$ to $2\Delta$. At the illustrated point below $P$, the increment is exact. At $P$, the exact sum $P+\Delta$ lies halfway between $P$ and $P+2\Delta$ and rounds to $P$ under ties-to-even. (c) Gap and priority recovery followed by boundary probes recovers the score pair under the stated arithmetic and range assumptions. (d) Subtracting the maximum before bias suppresses common-shift disclosure when stored variants yield identical centered vectors.}
\label{fig:overview}
\end{figure*}

For positive-score recovery, we determine the logit gap and the token tie-breaking priority, then probe the original anchor using the spacing change at an exponent boundary. The probe responses are periodic over part of the candidate range. We first use a small number of queries to determine the anchor's position within this repeating pattern, then restrict the candidate values accordingly to obtain a monotone comparison for binary search. Binary search then recovers the anchor exactly within the stated interval, and the recovered gap determines the second original logit.

Our central result assumes stable scores, independently converted biases, one correctly rounded addition per coordinate, greedy selection, and a fixed but unknown token tie-breaking order. Given a known reachable score interval, recovery takes at most 50 binary queries for a positive binary32 score pair and its priority, or 47 for negative scores. Selecting the two tokens takes two additional queries. We validate recovery across 11 public checkpoints from seven model families and extend the local method with interval search and anchor selection. End-to-end adaptive recovery through vLLM, SGLang, and llama.cpp succeeds on all 128 Qwen3-0.6B inputs per system, including 64 held-out inputs.

The recovered component has a direct role in output-head extraction. For a bias-free linear head under the rank conditions in Appendix~\ref{app:head}, centered logits determine $CWG$, where $C$ subtracts the vocabulary mean, $W$ is the output head, and $G$ is an invertible hidden-basis transformation. Original-logit anchors supply the missing shared row component and permit recovery of $WG$. We demonstrate this composition by extracting a pretrained model's complete $512\times64$ output-head representation from token responses across three independent input sets. On vocabulary rows excluded from basis alignment, weight RMSE ranges from $3.60\times10^{-7}$ to $1.20\times10^{-6}$ (Section~\ref{sec:head-evaluation}).

Finally, we analyze subtracting the maximum before bias addition as a mitigation. When common-shift variants yield identical preprocessed vectors, subsequent bias queries disclose no difference between them. Controlled experiments verify this condition and measure output preservation under independently generated sparse biases.

We make three contributions:
\begin{enumerate}
\item \textbf{Logit leakage through floating-point bias addition.} We characterize how rounding exposes a common additive component that remains unidentified by ideal probabilities and biased-argmax responses.
\item \textbf{Constructive recovery and output-head extraction.} We give local exact-recovery algorithms and query bounds, validate them through generation and serving interfaces, and combine recovered anchors with relative logits to extract an output-head representation.
\item \textbf{Processing-order analysis and mitigation.} We establish a sufficient condition for suppressing common-shift disclosure through pre-bias max subtraction and measure its effect on token decisions.
\end{enumerate}

\section{Background and Problem Formulation}\label{sec:setting}
\subsection{Logits, Probabilities, and Common Shifts}
At one generation position, input $x$ determines the vector $q(x)\in\R^V$ entering bias addition, where $V$ is the vocabulary size. We use \emph{scores} as a general term for numerical values used in token selection. A linear output head produces \emph{logits} $Wh(x)+\beta$ before softmax~\cite{vaswani2017attention,carlini2024stealing}. When bias addition acts directly on these outputs, we distinguish the \emph{original logits} from the resulting \emph{biased logits}. If temperature scaling or clipping precedes bias addition, our numerical analysis instead applies to the \emph{processed scores} entering that operation~\cite{huggingface2026generation}.

For a fixed input, write $q=q(x)$ and let $\mathbf1\in\R^V$ denote the all-ones vector. For any scalar $s\in\R$,
\begin{equation}
 \softmax(q+s\mathbf1)=\softmax(q).
 \label{eq:softmax-invariance}
\end{equation}
Writing the mean-centering projection as $C=I-\mathbf1\mathbf1^\top/V$ gives
\begin{equation}
 q=Cq+\mu\mathbf1,\qquad \mu=\mathbf1^\top q/V.
 \label{eq:decomposition}
\end{equation}
Here, $\mu$ is the mean score, and $Cq$ contains the deviations from that mean. Probabilities determine the centered component $Cq$ in exact arithmetic, but not $\mu$. An ideal biased-token interface
\begin{equation}
 O_q^{\mathrm{ideal}}(b)=\argmax_i(q_i+b_i)
 \label{eq:ideal}
\end{equation}
has the same ambiguity. This is the common freedom remaining in the logprob-free extraction setting of Carlini et al.~\cite{carlini2024stealing}.

Let $A$ be a selected token and $a=q_A$ its original score, which we call the anchor. Any recovered anchor fixes the common additive component: if an earlier extractor uses reference token $R$ and returns $d_i=q_i-q_R$, set $r_i=d_i-d_A$. Then $q_i=a+r_i$ and $\mu=a+\sum_i r_i/V$. The local protocol below recovers a score pair and its tie-breaking priority; Section~\ref{sec:anchor-selection} extends it to selecting the anchor.

The ambiguity also has an architectural interpretation. Replacing a linear output head $W$ by $W+\mathbf1v^\top$ changes the scores by $(v^\top h(x))\mathbf1$ without changing ideal probabilities. The lost scalar can therefore vary with the input and depend on a shared head component. Softmax-preserving output-head reparameterizations and their quantization effects have been studied for output-head optimization~\cite{kadav2026softmax}. We ask whether token responses to chosen biases can recover an original logit and thereby determine the missing mean $\mu$ when the complete relative logits are known.

\subsection{The Numerical Interface}\label{sec:contract}
We use the IEEE 754-2019 binary32 format (FP32), which has 24-bit significand precision~\cite{ieee2019floating}. For a fixed input, let $q$ be a vector of finite binary32 values. Let $\fl$ denote round-to-nearest, ties-to-even: halfway cases select the adjacent binary32 value with an even least significant significand bit~\cite{goldberg1991floating}. The observed interface is
\begin{equation}
 O_q(b)=\argmax_i\fl\!\left(q_i+\fl(b_i)\right).
 \label{eq:oracle}
\end{equation}
Each bias is converted independently before one binary32 addition. Scores remain fixed across queries for the same input. A fixed, initially unknown global tie-breaking order selects among tokens with equal maximum rounded scores. This selection is distinct from midpoint rounding; only the selected token identity is returned.

The recovery proofs use the separate bias conversion, rounded addition, and greedy selection in~\eqref{eq:oracle}. Conversion precision, inserted quantization, and fused operations can alter the comparisons~\cite{higham2002accuracy}. Section~\ref{sec:noise-design} extends the comparison protocol to specified errors in the returned tokens while keeping the underlying scores fixed.

A binary interval $[2^{e-1},2^e)$ is a \emph{binade}. Binary32 values in its normal range have spacing $\Delta=2^{e-24}$. At $P=2^e$, spacing doubles. We use this change to produce score-dependent rounding responses. The experiments use normal values far from overflow and underflow. The abstract extension to very small exponents additionally requires gradual underflow rather than flush-to-zero.

\subsection{Threat Model}\label{sec:threat}
\noindent\textbf{Protected information and adversary objective.}
The provider withholds the logits for each input. An external client seeks their original numerical values, including the common additive component left unidentified by ideal probabilities and biased argmax queries. Recovering one original anchor fixes this component when relative logits are known. Our local algorithm additionally recovers the target score $t$ and the pairwise tie-breaking priority; its exact-recovery criterion requires agreement with all three values at the specified decoder stage. Figure~\ref{fig:overview}(a) separates the client-visible interface from the private computation.

\noindent\textbf{Capabilities and observations.}
The client can repeat its own fixed input, request one greedy output token, and assign independent additive biases to selected token IDs within a public magnitude limit $H$. It adapts each bias to the preceding token responses. The selected token identity and tokenization are assumed observable. No returned probabilities, scores, score gaps, weights, hidden states, execution timings, co-location, model modification, or access to another user's request is required. The client does not choose the service's tie-breaking order, arithmetic implementation, or processing order.

\noindent\textbf{Numerical assumptions and prior knowledge.}
The service follows the stable-score operator in~\eqref{eq:oracle}, with an unknown fixed global tie-breaking order. Greedy selection alone does not establish score stability; Section~\ref{sec:noise} evaluates execution changes and extensions for noisy responses. The client knows this arithmetic model and a stated reachable score interval. The stated interval is required side information for the recovery theorem. In the single-window generation evaluation, separate calibration inputs select a public interval before testing; the test inputs do not supply scores to the observer. Precision, range, and stability are checked privately after predictions are frozen, and all fixed test inputs remain in the accounting.

The experiment evaluator can inspect private scores to measure success; the recovery controller cannot. In a search over multiple public intervals, uniqueness is defined relative to the prespecified search domain.

\section{Logit Leakage and Recovery}\label{sec:leakage}\label{sec:recovery}
\subsection{Leakage Mechanism}
Equation~\eqref{eq:ideal} makes every ideal query invariant under $q\mapsto q+s\mathbf1$. Common-shift variants therefore yield identical transcripts even for adaptive, randomized observers: conditioned on the observer's randomness, identical responses lead to identical subsequent queries. Floating-point addition need not preserve this equivalence.

We compare complete \emph{unperturbed} probability outputs with active numerical queries. An interface that returns probabilities after the same bias operations may also expose the additional information.

\noindent\textbf{A distinguishing query.}
Consider two binary32 vectors
\begin{equation}
 q=(30,28),\qquad q'=(29.875,27.875).
 \label{eq:witness-vectors}
\end{equation}
Their difference is a common shift. Their max-centered vectors are both $(0,-2)$, and their unperturbed probabilities are equal. Give the anchor bias $2$ and the target bias $4+3\cdot2^{-21}$. All supplied values are representable in binary32.

For $q$, the anchor becomes $32$. The target's exact sum is $32+3\cdot2^{-21}$, which rounds to $32$ because binary32 spacing above $32$ is $2^{-18}$. If the anchor has priority, it wins. For $q'$, the anchor becomes $31.875$. The target rounds to $31.875+2^{-19}$, because spacing below $32$ is $2^{-19}$, and wins strictly. Thus one numerical query distinguishes the two score vectors although their passive outputs are the same.

\begin{figure}[t]
\centering\includegraphics[width=\columnwidth]{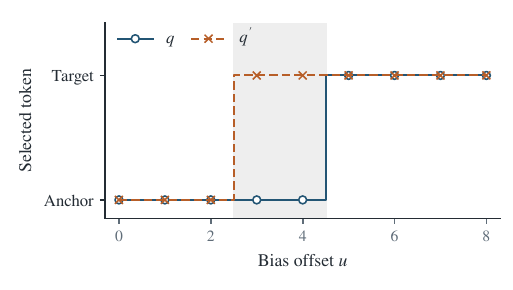}
\caption{Token responses for $q=(30,28)$ and $q'=q-\frac18\mathbf{1}$. The anchor bias is $2$; the target bias is $4+u2^{-21}$ for integer offsets $u$. The vectors have identical unperturbed probabilities. Shading marks offsets with different winners. The anchor wins ties between rounded scores; connecting lines guide the eye.}
\label{fig:mechanism}
\end{figure}

This example isolates the source of information. In exact arithmetic, the target would win for both vectors at the stated bias. Rounding distinguishes the two vectors, whose original relative scores are identical. Section~\ref{sec:defense-evaluation} evaluates how preprocessing affects these differences over complete vocabularies.

\subsection{Recovery Setting}
\noindent\textbf{Pair discovery and binary responses.}
An initial unbiased query selects the anchor, and a second query suppresses it by an allowed negative bias to select a target. The anchor is the original greedy winner, so $a\ge t$. Local recovery gives the anchor only nonnegative biases. No unmodified third token can then displace it under the fixed global tie-breaking order, making the subsequent response binary. Pair discovery is counted separately from the recovery bound.

Our recovery construction requires enough bias to move the two scores across a chosen binary exponent boundary $P=2^e$, where the spacing between adjacent binary32 values increases from $\Delta$ just below $P$ to $2\Delta$ just above it. Queries must also resolve the unknown tie-breaking priority: which token wins when the rounded scores are equal. We first specify a reachable score interval; Section~\ref{sec:consistency} addresses searches over multiple intervals and bounded-bias nonidentifiability.

Let $P=2^e$ with $-125\le e\le127$, assume gradual underflow and no overflow in probe arithmetic, and set $\Delta=P/2^{24}$. Define
\begin{equation}
 K=\left\lfloor\frac{\min(H,P/2)}{\Delta}\right\rfloor,\quad
 H_0=K\Delta,\quad F=P-H_0.
 \label{eq:range}
\end{equation}
For the anchor score $a$ and target score $t$, assume $K\ge4$ and
\begin{equation}
 a\ge t>F,\qquad
 a\in[\ell,u]\subset(F,P-2\Delta],
 \label{eq:local}
\end{equation}
where $\ell,u$ are $\Delta$ grid points. Both scores lie in the same binade, where binary32 values have spacing $\Delta$. Their gap $d=a-t$ is therefore a multiple of $\Delta$.

The interval in~\eqref{eq:local} provides enough bias to move the scores across $P$ while keeping every recovery bias between $0$ and $H_0$. The margin below $P$ also permits the adjacent probes used to resolve the priority and anchor residue. 

\subsection{Recovery Algorithm}
Algorithm~\ref{alg:recovery} implements the positive-score recovery sequence in Figure~\ref{fig:overview}(c). $\operatorname{FirstWinningIndex}$ finds the first target-winning bias index; $\operatorname{ResolveTie}$ identifies the gap and priority. $\operatorname{ResidueCandidates}$ selects anchors with the observed residue for $\operatorname{BinarySearch}$, using the monotone response $\operatorname{Probe}_{d,\tau}$. Write $O(b_0,b_1)$ for biases on tokens 0 and 1, with zero bias elsewhere. All subroutines use the public range parameters.

\par\smallskip
\noindent\begin{minipage}{\columnwidth}
\small
\refstepcounter{paperalgorithm}\label{alg:recovery}
\hrule\smallskip
\textbf{Algorithm~\thepaperalgorithm.} Exact recovery of a positive score pair.\par
\smallskip\hrule\smallskip
\textbf{Input:} Token oracle $O$, original winner 0, target 1, and public $P,H,\ell,u$ satisfying~\eqref{eq:local}; $\Delta,K,F$ follow~\eqref{eq:range}.\par
\textbf{Output:} Scores $(a,t)$ before bias addition and tie-breaking priority $\tau$.\par
\smallskip
\begin{tabular}{@{}r@{\quad}l@{}}
1 & $j\gets\operatorname{FirstWinningIndex}(O,\Delta,K)$\\
2 & $d_0\gets(j-1)\Delta$\\
3 & $\tau\gets\operatorname{ResolveTie}(O,d_0)$\\
4 & $d\gets d_0+\tau\Delta$\\
5 & $p_0\gets\max\{\ell,F+d+\Delta\}$\\
6 & $\mathcal S\gets\operatorname{ResidueCandidates}(O,d,\tau,p_0,u)$\\
7 & $a\gets\operatorname{BinarySearch}(\mathcal S,\operatorname{Probe}_{d,\tau})$\\
8 & $\textbf{return}\ (a,a-d,\tau)$
\end{tabular}
\smallskip\hrule
\end{minipage}
\par\smallskip

\noindent\textbf{Recovering the gap and tie-breaking priority.}
Label the original winner as token 0 and the target as token 1. Their pairwise tie-breaking priority is $\tau=0$ if token 0 wins when their rounded scores are equal, and $\tau=1$ otherwise. When only the target receives bias $k\Delta$, the first winning grid point $j$ determines exactly two possibilities:
\begin{equation}
 \begin{array}{ll}
 \tau=0:&d_0=(j-1)\Delta,\\
 \tau=1:&d_1=j\Delta=d_0+\Delta.
 \end{array}
 \label{eq:gap-hypotheses}
\end{equation}
The target loses at $k=0$ by the original-anchor definition and wins at $k=K$ because $t+H_0>P>a$. $\operatorname{FirstWinningIndex}$ finds $j$ by integer binary search in at most $\lceil\log_2K\rceil$ queries; the endpoint responses require no calls. Around the comparison threshold, the sums are representable grid values, so the search determines the two gap hypotheses exactly.

Three rounding probes identify the implementation's tie-breaking priority without assuming a token-ID order. Define
\begin{equation}
 p=\max\{\ell,F+d_0+\Delta\}.
\end{equation}
For $r=p,p+\Delta,p+2\Delta$, submit
\begin{equation}
 b_0=P-r,\qquad b_1=P-r+d_0+\Delta.
 \label{eq:tie-probes}
\end{equation}
Under the second hypothesis in~\eqref{eq:gap-hypotheses}, the two exact sums coincide and the target has priority, so all three replies are token 1. Under the first hypothesis, the responses follow the rounding pattern derived next and cannot all be token 1. $\operatorname{ResolveTie}$ returns $\tau=1$ exactly when all three replies are token 1, thereby resolving the gap ambiguity.

The equality case $a=t$ is included. It implies $\tau=0$ because token 0 was the original winner, and the first winning target bias is $\Delta$.

\noindent\textbf{Identifying the anchor residue.}
Suppose $d$ and $\tau$ are known. For a pivot $p\in\Delta\mathbb Z$ with $F+d+\Delta\le p\le P$, give the anchor bias $P-p$ and the target bias $P-p+d$. Their exact biased scores are equal:
\begin{equation}
 x=P+(a-p)=P+k\Delta,\qquad k=(a-p)/\Delta.
\end{equation}
Add one further $\Delta$ to whichever token has lower priority. After incorporating the increment, convert each bias independently to binary32 and add it to its score with one rounded addition.

Define $\operatorname{Probe}_{d,\tau}(p)$ as the indicator that token $1-\tau$ wins this query, and write $f(k)$ for its response at $k=(a-p)/\Delta$. Each evaluation makes one oracle call. Below $P$, both $x$ and $x+\Delta$ are representable and $f(k)=1$. At or above $P$, ties-to-even rounding operates on a grid of spacing $2\Delta$, yielding
\begin{equation}
 f(k)=
 \begin{cases}
  1,&k<0,\\
  0,&k\ge0,\ k\bmod4\in\{0,3\},\\
  1,&k\ge0,\ k\bmod4\in\{1,2\}.
 \end{cases}
 \label{eq:phase}
\end{equation}
The nonnegative pattern is $0,1,1,0$ (Figure~\ref{fig:residue-monotone}(a)). In particular, it contains no run of three ones. This establishes the priority test above because its initial pivot satisfies $p\le a$ under either gap hypothesis.

Set $p_0=\max\{\ell,F+d+\Delta\}$. Using at most three queries, $\operatorname{ResidueCandidates}$ either identifies $a=p_0$ or decodes the residue $\rho=((a-p_0)/\Delta)\bmod4$. It returns $\mathcal S=\{p_0\}$ in the former case and $\mathcal S=[p_0,u]\cap(p_0+\rho\Delta+4\Delta\mathbb Z)$ otherwise. A singleton interval needs no query. Appendix~\ref{app:boundary} gives the response table, the zero-offset exception, and the payload bounds.

\begin{figure}[t]
\centering
\includegraphics[width=\columnwidth]{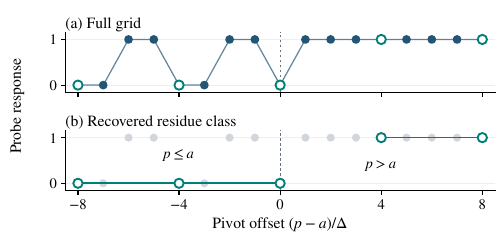}
\caption{Eliminating the periodic response. (a) Full-grid probes are periodic for $p\le a$ and constant for $p>a$. (b) The recovered residue class (open markers) gives a monotone comparison; gray points are excluded. The change at $p=a$ corresponds to floating-point spacing changing from $2\Delta$ to $\Delta$. Lines connect discrete queries. The horizontal origin uses the unknown $a$ for illustration.}
\label{fig:residue-monotone}
\end{figure}

Once the residue is known, restrict candidate anchors to that residue class, spaced $4\Delta$ apart. Both $a$ and a candidate pivot $p$ lie in this class, so $(a-p)/\Delta\in4\mathbb Z$. Equation~\eqref{eq:phase} therefore reduces to
\begin{equation}
 f((a-p)/\Delta)=1\quad\Longleftrightarrow\quad a<p.
 \label{eq:monotone}
\end{equation}
Figure~\ref{fig:residue-monotone} illustrates this restriction to a monotone comparison. $\operatorname{BinarySearch}$ queries $\operatorname{Probe}_{d,\tau}$ on $\mathcal S$, retaining candidates below the pivot after response 1 and those at or above it after response 0. It recovers $a$ in at most $\lceil\log_2|\mathcal S|\rceil$ queries, including zero for a singleton; then $t=a-d$.

\subsection{Correctness and Query Complexity}
\begin{theorem}[Local pair recovery]\label{thm:recovery}
Under~\eqref{eq:oracle} and~\eqref{eq:local}, with token 0 the original winner and a fixed global tie-breaking order, Algorithm~\ref{alg:recovery} recovers $(a,t,\tau)$ exactly. If the legal anchor interval after determining the gap contains $N=(u-p_0)/\Delta+1$ grid points, it uses at most
\begin{equation}
 \lceil\log_2K\rceil+6+
 \left\lceil\log_2\left\lceil N/4\right\rceil\right\rceil
 \label{eq:cost}
\end{equation}
binary recovery queries. For binary32 this is at most 50.
\end{theorem}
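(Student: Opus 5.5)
The proof will follow the four stages of Algorithm~\ref{alg:recovery}. At each stage we check two things: every submitted bias lies in $[0,H_0]$ and is exactly representable, so that $\fl(b_i)=b_i$; and every rounded comparison falls in the binade regime described by~\eqref{eq:phase}. Correctness will then follow by induction over the stages. Each stage outputs the true value of its quantity, namely the hypothesis pair $\{d_0,d_0+\Delta\}$, then $(d,\tau)$, then the residue class $\mathcal S\ni a$, then $a$. The query count is the sum of the per-stage counts.

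\emph{Stage 1 (gap hypotheses).} Give only the target bias $k\Delta$ with $0\le k\le K$. For $t+k\Delta<P$ the sum is an exact grid value. Token 1 wins iff $t+k\Delta>a$, or $t+k\Delta=a$ and $\tau=1$. This predicate is monotone in $k$. It is false at $k=0$ because token 0 is the original winner, and true at $k=K$ because $t+H_0>F+H_0=P>a$ and $a$ is unbiased. If the threshold is crossed while the rounded target sum is already at or above $P$, the target still wins strictly, so monotonicity persists. Integer binary search on $(0,K]$ with known endpoints uses $\lceil\log_2K\rceil$ calls and yields~\eqref{eq:gap-hypotheses}.

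\emph{Stage 2 (priority).} With the biases~\eqref{eq:tie-probes}, the anchor's exact sum is $P+(a-r)$ and the target's is $P+(a-r)+(d_0+\Delta-d)$. Nonnegativity of the biases uses $r\le P-2\Delta+2\Delta$. The bound $b_1\le H_0$ uses $r\ge F+d_0+\Delta$. If $d=d_0+\Delta$ (the case $\tau=1$), the two sums are equal, so every reply is token 1. If $d=d_0$ (the case $\tau=0$), the configuration is exactly the Probe at pivot $r$ with $\tau=0$. We need $p\le a$ for the three values of $r$ to produce three consecutive values of $f$ with $k\ge0$ or straddling $0$; this follows from $\ell\le a$ and $F+d_0+\Delta\le t+d_0+\Delta\le a+\Delta$, with care at the edge. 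Since the pattern $0,1,1,0$ has no run of three ones, at least one reply is token 0. This costs 3 queries.

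\emph{Stage 3 and 4 (residue and search).} I will establish~\eqref{eq:phase} directly. For $k<0$ both $x$ and $x+\Delta$ lie below $P$ and are representable. For $k\ge0$ they lie on $[P,2P)$ with spacing $2\Delta$. Ties-to-even rounds $P+k\Delta$ and $P+(k+1)\Delta$ according to $k\bmod4$, and checking the four cases gives the pattern. The paper defers the three-query residue decoding table (Appendix~\ref{app:boundary}) and I will take it as given; it yields $\mathcal S$ containing $a$ with $|\mathcal S|\le\lceil N/4\rceil$. On $\mathcal S$, equation~\eqref{eq:monotone} gives a monotone predicate, so binary search returns $a$ in $\lceil\log_2|\mathcal S|\rceil$ queries. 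Then $t=a-d$.

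\emph{Count and the main obstacle.} Summing gives $\lceil\log_2K\rceil+3+3+\lceil\log_2\lceil N/4\rceil\rceil$, which is~\eqref{eq:cost}. For binary32, $K\le P/(2\Delta)=2^{23}$, contributing 23. Also $N\le K$, so $\lceil N/4\rceil\le2^{21}$, contributing 21. The total is $23+6+21=50$.

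I expect the main obstacle to be the boundary bookkeeping rather than the counting. There are three places where it bites. First, the Stage 2 pivot $p$ is computed from $d_0$ rather than $d$ and must still satisfy $p\le a$ and the payload bounds under both hypotheses. Second, the zero-offset case $a=p_0$ in Stage 3 needs separate handling. Third, $p\le P-2\Delta$ is needed so that $x+\Delta$ never overflows past the binade. I would handle each by explicitly verifying the inequalities $F+d+\Delta\le p_0\le a\le P-2\Delta$ from~\eqref{eq:local}.
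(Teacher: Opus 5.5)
Your proposal is correct and follows the paper's proof essentially step for step: the same target-only threshold search, the three priority probes decided because \eqref{eq:phase} has no three consecutive ones once $k\ge0$, the residue step from Appendix~\ref{app:boundary}, binary search on the $4\Delta$-spaced class, and the count $23+3+3+21=50$. Two small points need tightening when you write it out. First, your chain for the Stage~2 pivot only gives $p\le a+\Delta$ when $d=d_0$, which is exactly the case where $p\le a$ matters; use that $t>F$ with both on the $\Delta$ grid, so $t\ge F+\Delta$ and hence $F+d_0+\Delta\le t+d_0\le a$. Second, state explicitly, as the paper does, that the nonnegative anchor bias together with the fixed total tie order prevents any third token from winning, which is what makes every reply binary.
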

\begin{proof}
The target-only search gives the two gap hypotheses exactly. The three probes in~\eqref{eq:tie-probes} identify the correct hypothesis by~\eqref{eq:phase}. At most three more probes identify the anchor residue, including the $k=0$ exception. The remaining class contains at most $\lceil N/4\rceil$ points, and~\eqref{eq:monotone} permits ordinary binary search. The special case of a singleton interval terminates earlier. Every pivot keeps the anchor bias nonnegative and all payloads within $H_0$; the original-anchor property excludes third-token responses. Finally, $K\le2^{23}$ and $N\le K$, giving $23+3+3+21=50$.
\end{proof}

For binary precision $p\ge3$, the same full-binade construction gives a worst-case upper bound $2p+2$, under the corresponding representability assumptions.

The full-binade domain in~\eqref{eq:local} contains $(K-2)^2$ admissible pair-and-priority states. For $a=F+i\Delta$, with $1\le i\le K-2$, targets $t=F+j\Delta$, $1\le j\le i$, admit $2i-1$ states: two priorities for each $j<i$ and only the original-anchor priority for $j=i$. A binary decision tree of depth $T$ has at most $2^T$ leaves. Thus any deterministic algorithm that always recovers both scores and their priority has worst-case binary query cost at least
\begin{equation}
 \left\lceil\log_2(K-2)^2\right\rceil,
 \label{eq:lower}
\end{equation}
where $K=2^{p-1}$; for binary32, the bound is 46.

Existing multiclass extraction methods use more informative responses to recover many relative coordinates~\cite{carlini2024stealing}, with costs measured per coordinate.

\subsection{Negative Scores and Mixed Precision}\label{sec:extensions}
Logits can be negative. A negative exponent boundary reverses the direction of the spacing change and admits a separate construction without shifting the service's scores. Let $S=2^e$, $\delta=S/2^{23}$, $K=\lfloor\min(H,S)/\delta\rfloor$, $H_0=K\delta$, and $F=-S-H_0$. Assume $K\ge2$, representable $\delta/2$, no overflow or flushing of required subnormals, and
\begin{equation}
 a\ge t>F,\qquad a\in[\ell,u]\subset(F,-S-\delta].
 \label{eq:negative-local}
\end{equation}
The public endpoints lie on the $\delta$ grid.

\begin{theorem}[Negative-score recovery]\label{thm:negative}
Under~\eqref{eq:oracle} and~\eqref{eq:negative-local}, the score pair and unknown fixed tie-breaking priority can be recovered in at most
\begin{equation}
 \lceil\log_2K\rceil+2+
 \left\lceil\log_2\left\lceil N/2\right\rceil\right\rceil\le47
 \label{eq:negative-cost}
\end{equation}
binary recovery queries, where $N$ is the remaining number of legal anchor grid points after gap recovery. The worst-case binary-query lower bound for the full-binade pair-and-priority problem is 46.
\end{theorem}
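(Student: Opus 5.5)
The plan mirrors the proof of Theorem~\ref{thm:recovery}, with the roles of the two sides of the exponent boundary reversed. Scores satisfying~\eqref{eq:negative-local} have magnitude above $S$, so they lie on the grid of spacing $\delta$. Adding a nonnegative bias moves them toward $-S$. Just above $-S$ (magnitude below $S$), binary32 spacing halves to $\delta/2$. The probes therefore use half-grid offsets $\delta/2$, which is why representability of $\delta/2$ and of the required subnormals is assumed. They also use the fact that such an offset is a rounding midpoint on the coarse side but exact on the fine side. I would carry out four steps in order, checking at each step that all biases lie in $[0,H_0]$ and that the anchor bias is nonnegative. As in the positive case, that last condition keeps third tokens from winning.

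\emph{Step 1 (gap).} Give only the target the bias $k\delta$ for $0\le k\le K$. The target loses at $k=0$ because token 0 is the original winner, and wins at $k=K$ because $t+H_0>-S>a$. In the threshold region all exact sums are $\delta$-grid values that remain representable, since the fine side contains the coarse grid. Integer binary search in $\lceil\log_2K\rceil$ queries therefore yields the two hypotheses $d_0$ and $d_0+\delta$ of~\eqref{eq:gap-hypotheses}.

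\emph{Step 2 (response function).} Fix a pivot $p\in\delta\mathbb Z$ with $p\ge F+d+\delta$. Give the anchor bias $-S-p$ and the target bias $-S-p+d$, so that both exact sums equal $x=-S+(a-p)=-S+k\delta$. Then add $\delta/2$ to the lower-priority token, and tabulate the analogue $g(k)$ of~\eqref{eq:phase}:
\begin{itemize}
\item For $k\ge0$, $x+\delta/2$ is exact, so the boosted token wins and $g(k)=1$.
\item For $k<0$, $x+\delta/2$ is a midpoint of the $\delta$ grid. Ties-to-even sends it up or down according to the parity of $x/\delta$, which by the choice of $S$ is determined by $k\bmod 2$.
\end{itemize}
The pattern for $k<0$ therefore alternates $0,1$, and it is constant on the fine side. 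The main obstacle is the boundary value $k=-1$, where $x+\delta/2=-S-\delta/2$ lies on the binade boundary. I would check directly that it rounds to $-S$, or treat it as a zero-offset exception analogous to the one in Appendix~\ref{app:boundary}. I would also check that the bias $\delta/2$ is converted exactly under independent conversion.

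\emph{Step 3 (priority and residue).} The period is now $2$ rather than $4$, so fewer probes suffice. Under the hypothesis $\tau=1$, a single probe built as in~\eqref{eq:tie-probes} produces coincident exact sums, and the target wins. Choosing the pivot just above $a$'s admissible minimum forces $k\ge0$ in one parity and distinguishes the hypotheses. A second probe at $p_0=\max\{\ell,F+d+\delta\}$ then decodes $a$'s residue modulo $2\delta$. I would aim to charge one query to each task, giving the constant $2$.

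\emph{Step 4 (search and count).} On the residue class, $k$ is always even. On that class $g$ becomes the monotone indicator of $p\le a$, up to orientation, so binary search over $\lceil N/2\rceil$ candidates finishes the recovery. Using $K\le2^{23}$ and $N\le K$, the total is at most $23+2+22=47$.

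\emph{Lower bound.} The counting argument preceding~\eqref{eq:lower} applies unchanged. The full-binade negative domain again contains $(K-2)^2$ pair-and-priority states, because each anchor $a$ admits $2i-1$ states. A binary decision tree of depth $T$ has at most $2^T$ leaves, and with $K=2^{23}$ this gives $T\ge\lceil\log_2(K-2)^2\rceil=46$.
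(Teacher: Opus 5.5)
Your Steps~1, 2 and~4 are sound and match the paper, including the response function $g$. Note that $k=-1$ needs no exception: $-S-\delta/2$ is a midpoint whose even neighbour is $-S$, so it follows the odd-$k$ rule.

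The gap is in Step~3. Both queries you charge to the constant $2$ return answers that do not depend on the unknowns.
\begin{itemize}
\item \emph{Priority probe.} You build it ``as in~\eqref{eq:tie-probes}'', with full offset $d_0+\delta$. Under $\tau=1$ the sums coincide and the target wins. Under $\tau=0$ the exact sums are $x$ and $x+\delta$. Both lie on the $\delta$ lattice, which is exactly representable on both sides of $-S$: the coarse side has spacing exactly $\delta$, and nonnegative anchor biases keep $x>-2S$. So the target wins strictly there too, for every legal pivot. The positive-case test works only because the coarse side above $P$ has spacing $2\Delta$, so a one-step offset can be a midpoint. Here the coarse spacing equals the score spacing, so only a half-step offset can ever round.
\item \emph{Residue probe.} Your pivot $p_0=\max\{\ell,F+d+\delta\}$ satisfies $p_0\le a$, so $k\ge0$. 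You found yourself that $g\equiv1$ on that side, so the probe carries no parity information. The periodic pattern exists only at pivots above the anchor.
\end{itemize}
As written, you therefore learn neither $d$ nor which token to boost. Your Step~4 search may then run on the odd class, where $g\equiv1$ for every $k$.

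The paper repairs both queries as follows.
\begin{itemize}
\item \emph{Priority.} Use a single half-step query at $r=F+d_0+\delta$ with $b_a=-S-r$ and $b_t=-S-r+d_0+\delta/2=H_0-\delta/2$. Under $\tau=0$, $a-r=t-F-\delta\ge0$, and the sums $x$ and $x+\delta/2$ with $x\ge-S$ are exact in the finer region, so the target wins. Under $\tau=1$, $a-r=t-F\ge\delta$, and the sums $x$ and $x-\delta/2$ lie strictly in the finer region, so the anchor wins.
\item \emph{Parity.} Probe once at $r=u+\delta$, which lies above every legal anchor. This pivot is legal: $u\le-S-\delta$ keeps the anchor bias nonnegative, and $r\ge F+d+\delta$ keeps every payload within $H_0$. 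Since $k\le-1$, the response is $\mathbf 1[k\text{ odd}]$, with no boundary exception.
\end{itemize}
Then restrict pivots to the class on which $k$ is even and search as in your Step~4.

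A minor point on the lower bound: the full negative domain has anchors $F+i\delta$ for $1\le i\le K-1$, giving $(K-1)^2$ states rather than $(K-2)^2$. Your set is a subset of these states, so the bound of $46$ still follows.
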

The gap search is unchanged. One half-step query in the finer region resolves the tie-breaking priority. A further query at a pivot above every legal anchor identifies the anchor's parity. Restricting pivots to that parity turns the rounding response into a monotone comparison. Appendix~\ref{app:negative} proves the construction and its payload bounds. Changing to another known IEEE rounding direction permits modified probes with the same uniform upper bound; a mode change alone is therefore not a general suppression argument.

Known score precision restricts both searches to the storage grid. Storage-aligned probes also remove the fine-grid residue step, giving at most $2s-1$ recovery queries for $s$-bit stored scores when the bias and addition precision is at least $s+2$ and the local range and representability conditions hold: 15 for BF16 and 21 for binary16 with binary32 addition (Appendix~\ref{app:mixed}). The required grid is that of the scores entering bias addition; weight or forward-pass precision alone does not establish it. Lower-precision storage retains the channel when promoted scores accept finer biases.

\subsection{Consistency Checks and Abstention}\label{sec:consistency}
Outside a supplied local interval, we use local recovery to generate candidates over a fixed collection of windows. Each candidate must reproduce the combined transcript under the declared operator; zero or multiple surviving candidates cause abstention. A unique candidate is accepted relative to the specified search domain.

Bounded biases can leave distinct score vectors observationally equivalent (Appendix~\ref{app:deadband}). Thus, transcript consistency establishes uniqueness within the searched domain; exact recovery additionally requires the local arithmetic and range assumptions.

\subsection{Adaptive Anchor Selection}\label{sec:anchor-selection}
If no unique candidate survives the current pair's checks, discard its anchor and retry. Every query fixes bias $-H$ on previously discarded anchors $S$. A query with zero bias elsewhere selects anchor $A$; additionally suppressing $A$ selects target $T$. A response from $S$, or failure to select a distinct target, stops the search.

\noindent\textbf{Preservation of local recovery.}
The fixed mask defines a baseline vector with entries $\fl(q_i-H)$ on $S$ and $q_i$ elsewhere. Its greedy winner is $A$, so nonnegative anchor biases preserve binary responses under the fixed tie-breaking order. For $A,T\notin S$, the service computes $\fl(q_A+b_A)$ and $\fl(q_T+b_T)$ directly from the original logits. The local theorems therefore apply whenever the selected pair satisfies their range assumptions.

The search returns a unique candidate passing all checks or stops after $R$ pair attempts, biasing at most $R+1$ tokens per query. Window searches, checks, and retries contribute to the complete recovery cost.

\subsection{Output-Head Extraction}\label{sec:head-method}
Building on the output-head factorization of Carlini et al.~\cite{carlini2024stealing}, we combine complete relative logits with recovered anchors to determine a bias-free linear head up to an invertible hidden-basis transformation. Bias bisection against the unbiased winner estimates each relative coordinate (Appendix~\ref{app:head}). Centering these estimates across the vocabulary gives a matrix that we factor as $\widehat B\widehat X$. The anchors determine the vector $\boldsymbol\mu$ of mean logits over extraction inputs. Fitting $\boldsymbol\mu=u\widehat X$ gives the shared row $u$, and adding it to every vocabulary row completes the head as $\widehat W=\widehat B+\mathbf1u$.

If the centered head and hidden-state matrix have full hidden rank, these factors represent $CWG$ and $WG$ for an invertible $G$; Appendix~\ref{app:head} derives the identity and handles rank loss under centering. Relative-coordinate and floating-point errors make the numerical head estimates approximate even when the anchors are exact. For a held-out input, queried relative coordinates determine its coordinates in the learned basis; $u$ predicts the mean without using that input's anchor.

\section{Experimental Setup}\label{sec:methodology}
\subsection{Models and Interfaces}
We evaluate 11 checkpoints from seven families, ranging from 14M to 4B parameters: Pythia~\cite{biderman2023pythia}, GPT-2~\cite{radford2019gpt2}, GPT-Neo~\cite{black2021gptneo}, OPT~\cite{zhang2022opt}, BLOOM~\cite{bigscience2022bloom}, SmolLM2~\cite{benallal2025smollm2}, and Qwen3~\cite{qwen2025report}. Generation experiments invoke unmodified \texttt{model.generate} for one greedy token with the \texttt{SequenceBiasLogitsProcessor} in Transformers 4.57.1~\cite{wolf2020transformers,huggingface2026generation}. Forward computation uses FP32 or BF16 as specified below; bias addition uses FP32.

Serving experiments use Qwen3-0.6B and SmolLM2-360M on research-controlled vLLM 0.10.2~\cite{kwon2023pagedattention}, SGLang 0.5.3.post2~\cite{zheng2024sglang}, and \texttt{llama-server} at commit \texttt{7f2dd88b0ac3}~\cite{llamacpp2026server}. GPU servers use BF16 weights; llama.cpp uses CPU FP32 GGUF. All add binary32 biases. Output-head extraction uses pretrained TinyStories-260K~\cite{karpathytinyllamas}, whose bias-free head has a complete 512-token vocabulary, through a research-controlled CPU FP32 HTTP service.

\subsection{Observation and Validation Protocol}
The observer receives token IDs; evaluator-only hooks record original logits as ground truth. End-to-end runs obtain every response through generation calls, while cached evaluations apply the decoder to saved full-vocabulary vectors behind the same token-only interface. Predictions are frozen before private scoring. Independent replay checks recorded responses, and selected serving traces are repeated without instrumentation. Calibration uses separate inputs; precision, budget, and framework comparisons reuse the specified input sets and count their model--input conditions separately. Each study below gives its input split and comparison protocol.

\subsection{Evaluation Metrics}\label{sec:metrics}
For true scores and tie-breaking priority $(a,t,\tau)$ and a returned estimate $(\hat a,\hat t,\hat\tau)$, we define the maximum absolute score error and exact-recovery criterion as
\begin{equation}
\begin{aligned}
 E_{\max}&=\max\{|\hat a-a|,\ |\hat t-t|\},\\
 \text{Exact recovery}&\iff E_{\max}=0\ \land\ \hat\tau=\tau.
\end{aligned}
\label{eq:success}
\end{equation}
For the finite, nonzero scores in our recovery domains, score equality is bitwise equality of the binary32 representations. Even a one-ULP discrepancy is counted as incorrect. We report recovered, incorrect, and unresolved cases over all attempted inputs, as well as exact recovery conditional on the stated range assumptions.

Adaptive anchor recovery requires $\hat a=q_A$ for the returned token $A$; we also evaluate the selected pair under~\eqref{eq:success}, retaining all attempted inputs.

A \emph{recovery query} is one call used after token-pair selection; the two selection calls are counted separately. A \emph{generation call} recomputes the model and applies its decoding operations. A \emph{cached decoder evaluation} applies those operations to a saved score vector. We report these computational costs separately.

\section{Evaluation}\label{sec:evaluation}
We address four research questions:
\begin{itemize}
\item \textbf{RQ1:} Can token responses recover original logits across models and numerical precisions, and how does range coverage affect recovery? (Section~\ref{sec:recovery-evaluation})
\item \textbf{RQ2:} Can the adaptive procedure recover logits through serving APIs, and what is its complete query cost? (Section~\ref{sec:native-evaluation})
\item \textbf{RQ3:} Can recovered anchors extend token-only output-head extraction from $CWG$ to $WG$? (Section~\ref{sec:head-evaluation})
\item \textbf{RQ4:} How do execution changes and token-response errors affect recovery? (Section~\ref{sec:noise})
\end{itemize}

\subsection{RQ1: Recovery across Models and Precisions}\label{sec:recovery-evaluation}\label{sec:generation-design}
The FP32 evaluation uses the same 64 fixed test inputs per checkpoint. Eight prefixes, including Chinese, JSON, and always-A instructions, are balanced across arithmetic and word-continuation text to exercise the numerical interface. The single-window baseline decodes under one evaluator-calibrated interval; the adaptive controller searches windows, checks the combined transcript, and reselects anchors as defined in Sections~\ref{sec:consistency}--\ref{sec:anchor-selection}.

Before testing, eight separate calibration inputs select one public positive or negative interval by maximum calibration coverage, breaking coverage ties by ascending exponent and then positive before negative. The candidate boundaries are $\pm2^e$ for $e=-6,\ldots,10$, with the theorem's endpoint margins and primary bias limit $H=100$. Calibration uses evaluator score access to supply the assumed interval; it is not a token-only range-estimation procedure. No test score reaches the observer, and every fixed input is retained.

A post-hoc cached evaluation compares all 34 signed windows, first with the original pair and then with selection capped at eight pairs. It also evaluates larger bias budgets and generic versus known-precision recovery on the same vectors (Appendix~\ref{app:anchor-selection}). End-to-end adaptive runs use 64 FP32 GPT-Neo inputs and 32 BF16 inputs each for OPT and Qwen3 at $H=100$. Each query recomputes the model and returns only the selected token.

\begin{table*}[t]
\centering\small
\caption{FP32 recovery with a calibrated interval and adaptive anchor selection.}
\label{tab:evidence}
\begin{tabular*}{\linewidth}{@{\extracolsep{\fill}}lrrrrrrr@{}}\toprule
 & & \multicolumn{3}{c}{Generation} & \multicolumn{3}{c}{Cached multi-window search}\\
\cmidrule(lr){3-5}\cmidrule(l){6-8}
Model & $H$ & Boundary & Exact & Calls & Boundaries & Fixed & Selected\\\midrule
Pythia-14M & 100 & $16$ & 64 & 3,284 & $16$ & 64 & 64\\
Pythia-160M & 100 & -- & 0 & 1,600 & -- & 0 & 0\\
Pythia-1.4B & 100 & $16$ & 64 & 3,269 & $16$ & 64 & 64\\
GPT-2 Medium & 100 & $-64$ & 54 & 3,136 & $-64, -32$ & 62 & 64\\
GPT-Neo-125M & 100 & $-1$ & 13 & 2,363 & $-2, -1, -0.5$ & 19 & 64\\
OPT-1.3B & 100 & $16$ & 58 & 3,273 & $8, 16$ & 59 & 64\\
BLOOM-560M & 100 & $512$ & 56 & 3,095 & $512$ & 56 & 56\\
SmolLM2-360M & 100 & $32$ & 64 & 3,288 & $32$ & 64 & 64\\
SmolLM2-1.7B & 100 & $16$ & 32 & 3,279 & $16, 32$ & 64 & 64\\
Qwen3-0.6B & 100 & $32$ & 63 & 3,254 & $32$ & 63 & 64\\
Qwen3-4B-Instruct & 100 & $64$ & 36 & 3,143 & $32, 64$ & 50 & 64\\
\midrule\multicolumn{8}{l}{Larger-budget ablation on the same inputs}\\
Pythia-160M & 256 & $1024$ & 64 & 3,156 & $1024$ & 64 & 64\\
BLOOM-560M & 160 & $512$ & 64 & 3,198 & $512$ & 64 & 64\\
\bottomrule\end{tabular*}
\par\smallskip
\begin{minipage}{\linewidth}\footnotesize
\textit{Notes.} Counts are out of 64 inputs. Generation uses one evaluator-calibrated interval; Calls exclude calibration. Its 704 primary cases include 103 incorrect and 97 unresolved outcomes. Cached Boundaries list accepted windows (--: none), from $\pm2^e$, $e=-6,\ldots,10$. Fixed uses one pair; Selected tries up to eight. Accepted anchors meet exact pair-and-priority recovery; other cached outcomes are unresolved. Pythia-160M at $H=100$ has zero calibration coverage (--); its 1,600 calls use the fallback $P=2^{-6}$.
\end{minipage}
\end{table*}

Across 32,984 generation calls on 704 FP32 cases, all 504 covered pairs and priorities recover exactly: 504/704 (71.6\%) overall and 504/504 within the calibrated intervals (Table~\ref{tab:evidence}). GPT-2 and GPT-Neo exercise negative recovery on 54 and 13 covered inputs. Every response matches full-vocabulary binary32 replay, and original-logit hashes remain stable within each input, validating the tested generation path in~\eqref{eq:oracle}.

Outside the calibrated interval, the single-window baseline returns 103 incorrect candidates and leaves 97 unresolved. Cached window search recovers 565/704 inputs with the original pair; adaptive selection increases this to 632/704 (89.8\%) with zero incorrect accepted candidates. Nine checkpoints reach 64/64, using at most three pairs. All 72 remaining inputs exhaust eight attempts outside the tested local domains.

The larger-budget ablation reaches 64/64 for Pythia-160M at $H=256$ and BLOOM at $H=160$, bringing cached adaptive coverage to 704/704. BLOOM's eight unresolved $H=100$ targets require $512-t\approx106.77$--$143.17$ bias, explaining their exclusion by that budget. Table~\ref{tab:evidence} retains both budgets; the artifact contains the full budget sweep.

With all 34 windows and anchor selection, end-to-end GPT-Neo recovery reaches 64/64 at $H=100$; 45 inputs require another anchor. The complete procedure averages 2,428 generation calls per input, including selection, window searches, checks, and retries. This cost measures the broad search beyond the local recovery bound.

\noindent\textbf{BF16 logits.}
Four checkpoints additionally use BF16 forward computation: Pythia-160M, OPT-1.3B, BLOOM-560M, and Qwen3-0.6B. For each, the same 32 inputs are evaluated with the binary32 recovery algorithm and the algorithm using the known logit precision; the generation processor still adds FP32 biases. We verify the logit lattice privately. A separately specified budget ablation repeats configurations with zero calibration coverage at $H=100$ using $H=256$. Selection depends on calibration, not test success. It uses the same test inputs and keeps the original-budget outcomes.

The channel persists with BF16 logits and FP32 bias processing in the tested configurations. At $H=100$, both controllers recover all 88 covered inputs out of 128; the larger-budget Pythia-160M intervention adds 32 covered paired cases. Complete original logit vectors, token pairs, and rendered inputs match bit for bit between controllers. The original logits lie on a BF16 lattice, as required by Proposition~\ref{prop:mixed}.

\begin{table}[t]
\centering\small
\caption{BF16 recovery with FP32 bias processing.}
\label{tab:mixed}
\begin{tabular*}{\linewidth}{@{\extracolsep{\fill}}lrrrrr@{}}\toprule
 & & \multicolumn{3}{c}{Generation} & Cached\\
\cmidrule(lr){3-5}\cmidrule(l){6-6}
Model & $H$ & Exact & Generic & Known & Selected\\\midrule
Pythia-160M & 100 & 0 & -- & -- & 0\\
Pythia-160M & 256 & 32 & 48 & 13 & 32\\
OPT-1.3B & 100 & 30 & 50 & 14--15 & 32\\
BLOOM-560M & 100 & 28 & 46--48 & 11--13 & 28\\
BLOOM-560M & 160 & NR & NR & NR & 32\\
Qwen3-0.6B & 100 & 30 & 49--50 & 14--15 & 32\\
\bottomrule\end{tabular*}
\par\smallskip
\begin{minipage}{\linewidth}\footnotesize
\textit{Notes.} Counts are out of 32; both algorithms agree. Generic and Known columns give generic and BF16-aware recovery queries, excluding pair selection. Both use FP32 bias addition. Cached selection uses multiple windows and up to eight pairs, with zero incorrect returns. --: no covered case; NR: generation not run.
\end{minipage}
\end{table}

Table~\ref{tab:mixed} shows 11--15 local queries with known BF16 precision versus 46--50 with generic recovery. Cached anchor selection covers 128/128 inputs at the displayed larger budgets. In end-to-end adaptive runs at $H=100$, OPT reaches 32/32 with both algorithms on bitwise-identical original logits. Mean complete costs are 439 calls with known precision and 1,265 with generic recovery. Qwen3 reaches 32/32 with generic recovery at 1,060 calls per input. Each model needs a second pair on two inputs.

\subsection{RQ2: Recovery in Serving Systems}\label{sec:native-evaluation}\label{sec:native-design}
Each model--framework configuration uses 64 fixed inputs, $H=100$, and one interval calibrated on eight separate inputs with evaluator-only scores. An independent client submits biases through the generation API under fixed execution settings. Read-only probes record evaluation scores; every eighth input's complete trace is rerun on a fresh uninstrumented server. Appendix~\ref{app:repro} specifies the interface fields.

Cached ablations on these 384 vectors compare $P=16$, $P=32$, and both, with up to eight pairs and six candidate checks. We freeze the two-window controller before evaluating it end to end on Qwen3 through each API, using the original 64 inputs and 64 disjoint held-out inputs per framework. All payloads, including repeats, are submitted to the server.

The single-window baseline at $P=32$ recovers exactly 301/384 inputs: 37/36/36 Qwen3 cases across vLLM/SGLang/llama.cpp and all 192 SmolLM2 cases (Table~\ref{tab:native}). The other 83 return incorrect candidates outside the calibrated interval; none is unresolved (Section~\ref{sec:range-discussion}). Covered cases use 49--50 recovery queries plus two selection calls. All 19,686 responses match binary32 replay, with stable original logits; 2,447 uninstrumented-server queries reproduce their tokens.

\begin{table}[t]
\centering\small
\caption{End-to-end recovery in LLM serving systems.}
\label{tab:native}
\begin{tabular*}{\linewidth}{@{\extracolsep{\fill}}lrrrrr@{}}\toprule
 & \multicolumn{2}{c}{Single window} & \multicolumn{3}{c}{Adaptive Qwen3}\\
\cmidrule(lr){2-3}\cmidrule(l){4-6}
Framework & Qwen3 & SmolLM2 & Dev. & Test & Calls\\\midrule
vLLM & 37 & 64 & 64 & 64 & 119.6\\
SGLang & 36 & 64 & 64 & 64 & 120.4\\
llama.cpp & 36 & 64 & 64 & 64 & 115.4\\
\bottomrule\end{tabular*}
\par\smallskip
\begin{minipage}{\linewidth}\footnotesize
\textit{Notes.} Recovery counts are out of 64; $H=100$. Models: Qwen3-0.6B and SmolLM2-360M. Single window uses $P=32$. Adaptive uses $P=16,32$, checks, and up to eight pairs. Dev./Test denote the development and disjoint held-out inputs. Calls are mean complete API calls across both adaptive sets, including selection, searches, checks, and retries; calibration and independent replay are excluded.
\end{minipage}
\end{table}

Single-window checks still accept seven incorrect Qwen3 candidates on llama.cpp: four at $P=16$ and three at $P=32$. All lie outside the assumed local domain and contradict the other window's transcript. Combining both windows recovers 58/57/56 of the 64 development inputs across the three frameworks with the original pair; anchor selection raises each to 64/64.

The frozen two-window controller recovers 64/64 development and 64/64 held-out inputs per framework: 384/384 exact pairs and priorities, with no incorrect or unresolved returns. Another anchor is needed in 39 cases. Complete cost averages 118.5 API calls per input condition; per-framework medians are 109/109/108 and maxima are 310/312/215. All 45,497 responses match binary32 replay, and 6,271 uninstrumented-server queries reproduce their tokens. Appendix~\ref{app:anchor-selection} gives the search-cost breakdown.

\subsection{RQ3: Output-Head Extraction}\label{sec:head-evaluation}\label{sec:head-design}
We evaluate the completion in Section~\ref{sec:head-method} on three independently generated input sets for TinyStories-260K. Sets 1 and 3 contain 96 extraction inputs and 16 held-out inputs; set 2 contains 192 and 32. Random token prefixes have lengths 4, 8, and 16. Each HTTP request recomputes 64 copies of the input, applies independent biases, and returns token IDs; padding fixes the batch shape. The observer uses $H=100$, boundaries $P\in\{1,2,4,8,16,32\}$, and at most four pair attempts.

The observer receives no hidden width. Rank selection uses the largest adjacent singular-value ratio, requiring a ratio of at least 10, a preceding singular value above $10^{-7}$ of the largest, and two input columns beyond the selected rank.

Separate programs collect observations, extract factors, and validate frozen results. The centered and completed factors share one basis alignment fitted on 256 vocabulary rows, centered using those rows; head errors are evaluated on the remaining 256. Held-out predictions use queried relative coordinates and $u$; separately recovered test anchors are used only for evaluation. A post-hoc constant baseline predicts the average of the extraction inputs' recovered means. An input-budget analysis refactors nested extraction prefixes with the held-out inputs fixed.

On TinyStories-260K, all three random-prefix input sets yield complete relative-coordinate estimates and exact anchors, pairs, and priorities for every extraction and held-out input. Each full extraction identifies rank 64, matching the hidden width. Table~\ref{tab:head-extraction} reports head reconstruction and prediction of held-out mean logits from their queried relative coordinates.

\begin{table*}[t]
\centering\small
\caption{Token-only extraction of TinyStories-260K's complete $512\times64$ output head and held-out mean prediction.}
\label{tab:head-extraction}
\begin{tabular*}{\textwidth}{@{\extracolsep{\fill}}crrrrrr@{}}\toprule
Input set & Extraction/test & Exact anchors & Relative head error & Mean RMSE & Token decisions & HTTP batches\\\midrule
1 & 96/16 & 96/96 & $3.96\times10^{-6}$ & $3.11\times10^{-7}$ & 1,370,786 & 45,873\\
2 & 192/32 & 192/192 & $1.19\times10^{-6}$ & $4.59\times10^{-7}$ & 2,740,717 & 91,584\\
3 & 96/16 & 96/96 & $3.35\times10^{-6}$ & $7.21\times10^{-7}$ & 1,372,446 & 46,928\\
\bottomrule\end{tabular*}
\par\smallskip
\begin{minipage}{\textwidth}\footnotesize
\textit{Notes.} Independent input sets use the same checkpoint. Head error is relative Frobenius error on 256 vocabulary rows excluded from basis alignment. Mean-logit RMSE evaluates prediction from held-out relative coordinates without their anchors. Anchor counts and costs cover extraction inputs only; all 64 separately evaluated test anchors are also exact. HTTP requests batch independently biased decisions. Probability queries: zero.
\end{minipage}
\end{table*}

\noindent\textbf{Recovery of the shared component.}
Under the shared basis alignment, the completed head has relative Frobenius error $1.19\times10^{-6}$--$3.96\times10^{-6}$. The relative branch recovers its own target $CW$ with RMSE $3.50\times10^{-7}$--$1.10\times10^{-6}$, but differs from the original $W$ by RMSE $0.189$ because it omits the shared row. The recovered shared row has RMSE $8.25\times10^{-8}$--$4.32\times10^{-7}$. Figure~\ref{fig:head-extraction}(a) shows the entrywise absolute errors before and after this completion. The extracted representation is $WG$, retaining the invertible hidden-basis ambiguity.

For the 64 held-out inputs, the learned map predicts the mean from relative coordinates with RMSE $3.11\times10^{-7}$--$7.21\times10^{-7}$. Predicting the average extraction-input mean instead gives RMSE $0.895$--$1.162$, showing the variation captured by the map. Combining the predicted means with relative coordinates gives complete original-logit RMSE $2.15\times10^{-6}$--$3.60\times10^{-6}$. Appendix~\ref{app:head} reports numerical diagnostics.

\begin{figure*}[t]
\centering\includegraphics[width=\textwidth]{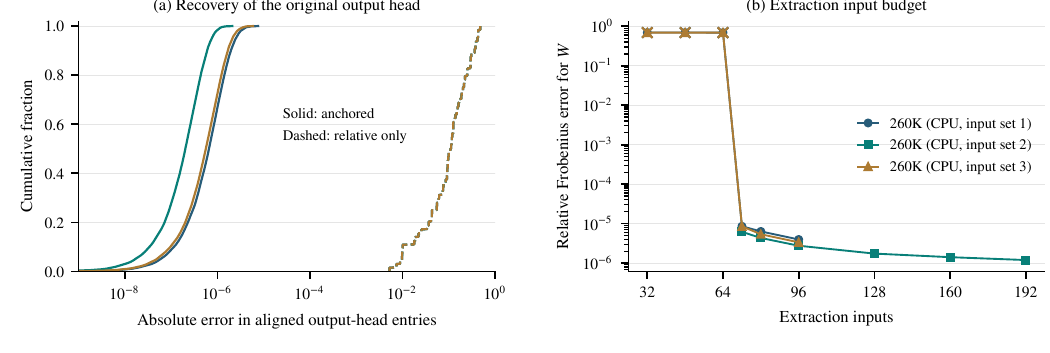}
\caption{Token-only output-head extraction from TinyStories-260K across three independently generated input sets. (a) Empirical distributions of absolute entry errors against the original $W$, after the same basis alignment. Solid curves include the recovered shared component; dashed curves omit it. The relative branch's own target is $CW$. (b) Output-head relative error as the extraction input count increases, with each set's test inputs fixed. Crosses mark inferred ranks below 64; set 2 is unresolved at 32--64 inputs and has no corresponding error points. Lines connect measured counts.}
\label{fig:head-extraction}
\end{figure*}

\noindent\textbf{Input budget and query cost.}
At 32--64 inputs, the rank rule selects rank 1 or remains unresolved: its full-rank candidate needs more than 64 input columns. All three sets select rank 64 at 72 inputs. For set 2, increasing the budget from 96 to 192 reduces relative head error from $2.75\times10^{-6}$ to $1.19\times10^{-6}$ (Figure~\ref{fig:head-extraction}(b)). Across the three sets, mean cost per extraction input is about 14,300 token decisions: 255--267 for the anchor and 14,019--14,029 for relative coordinates. Anchors account for about 1.8\% of decisions; their sequential queries and the batched relative-coordinate probes have different latency costs. Table~\ref{tab:head-extraction} separates extraction cost from held-out evaluation.

\subsection{RQ4: Execution Stability and Response Noise}\label{sec:noise}\label{sec:noise-design}
\noindent\textbf{Stability of the numerical target.}
Greedy decoding can produce execution-dependent floating-point scores~\cite{he2025nondeterminism}. We test Pythia-14M and Pythia-160M with eight new inputs each, CPU FP32 inference, and batch sizes $1,2,4,8$. Each input--batch condition repeats six probes ten times: no bias, early and final gap search, priority identification, residue identification, and final anchor search. We also repeat complete recovery at fixed batch sizes 1 and 4.

Across 3,840 generation calls, unmodified greedy tokens remain unchanged, but the selected Pythia-14M and Pythia-160M scores move by up to 5 and 2 ULPs across batch conditions. Scores remain bitwise stable within each fixed input--batch condition. Complete recovery succeeds on 16/16 inputs at each of batch sizes 1 and 4, although 15/16 recovered pairs differ between conditions. Fixing execution conditions defines a stable recovery target in this study.

\noindent\textbf{Errors in responses about a fixed target.}
To isolate response noise from score changes, we inject independent token-response flips with probability $p_{\mathrm{flip}}\in\{0,0.01,0.05,0.1,0.2,0.3,0.4\}$ into the 504 previously covered FP32 cases. Five noise seeds per case give 2,520 simulated recoveries per method and noise level. Token pairs and public intervals are fixed before injection; the 200 out-of-range cases and pair discovery are outside this conditional study.

We compare four binary-response rules. \emph{Single} uses one response; \emph{Majority-9} votes over nine. The sequential likelihood-ratio rule (\emph{LR}) uses error bound $\bar p_{\mathrm{flip}}=\max(0.1,p_{\mathrm{flip}})$~\cite{wald1945sequential}. The confidence-interval rule (\emph{CI}) samples until an interval excludes $1/2$, without a noise-rate bound, allocating error across predetermined sample sizes~\cite{howard2021timeuniform}. Both use recovery-wide error budget $\delta=0.01$ for at most $M=50$ logical comparisons. Each comparison is capped at 4,096 responses; an undecided comparison stops recovery without an estimate. Appendix~\ref{app:noise} specifies the assumptions and tests. We report exact recovery and mean token responses after pair discovery.

\begin{figure*}[t]
\centering\includegraphics[width=\textwidth]{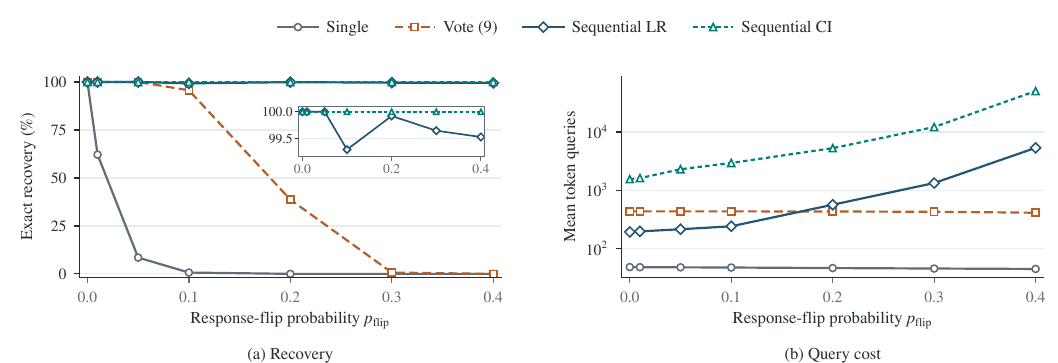}
\caption{Exact recovery and cost under independent response flips. Single uses one response per comparison; Majority-9 uses nine-response voting; LR and CI use sequential likelihood-ratio and confidence-interval tests, respectively. Each point aggregates 504 fixed in-range pairs and five noise seeds. Cost counts responses after pair discovery. The inset enlarges LR and CI near 100\%; the cost axis is logarithmic. Lines connect measured conditions.}
\label{fig:noise}
\end{figure*}

Figure~\ref{fig:noise} and Table~\ref{tab:noise} evaluate independent flips of responses from fixed in-range pairs. At 10\% flips, LR achieves higher recovery and lower mean cost than nine-response voting. LR uses a noise-rate upper bound; CI recovers all simulated cases at higher cost without that bound. The 2,520 attempts comprise 504 numerical cases and five noise seeds.

\begin{table}[t]
\centering\small
\caption{Recovery at 10\% independent response flips.}
\label{tab:noise}
\begin{tabular*}{\linewidth}{@{\extracolsep{\fill}}lrr@{}}\toprule
Rule & Exact recovery & Mean responses\\\midrule
Single & 17/2,520 & 47.8\\
Majority-9 & 2,411/2,520 & 437.4\\
LR & 2,502/2,520 & 242.7\\
CI & 2,520/2,520 & 2,944.2\\
\bottomrule\end{tabular*}
\par\smallskip
\begin{minipage}{\linewidth}\footnotesize
\textit{Notes.} Rules are defined in Section~\ref{sec:noise-design}.
Exact recovery requires both scores and priority; all 2,520 attempts per rule are counted.
Responses include repetitions after pair discovery. No attempt is unresolved at this noise level.
\end{minipage}
\end{table}

A separate condition selects uniformly between tokens with equal maximum rounded scores; strict comparisons remain exact. We compare CI with a rule that emulates a chosen fixed priority: return token 1 only when all 13 repeated responses are token 1, and token 0 otherwise. Its guarantee and score-pair recovery objective are given in Appendix~\ref{app:noise}.

Under uniform random token tie breaking, CI leaves all 2,520 cases unresolved. Emulating fixed priority instead recovers $2{,}516/2{,}520$ score pairs at mean cost 632.1. These rules estimate comparisons for fixed scores; they do not undo the score changes measured in the batch experiment.

\section{Defense}\label{sec:defense}
\subsection{Subtracting the Maximum before Bias Addition}
Subtract the maximum score before adding any bias (Figure~\ref{fig:overview}(d)): $A_i(q)=\fl(q_i-\max_jq_j)$, computed coordinatewise.

\begin{proposition}[Max subtraction before bias addition]\label{prop:prebias}
For every allowed input $x$, let the finite stored binary32 vectors satisfy the exact real relation $q'(x)=q(x)+c(x)\mathbf1$. Both services apply correctly rounded max subtraction $A$ with consistent signed-zero handling. Suppose subsequent observable responses depend on the original scores only through $A(q(x))$, under the same conditional response kernel given the public history, input, and controls, including tie breaking and internal randomness. Then every adaptive observer obtains identical transcript distributions for any finite query budget or almost-surely finite stopping rule.
\end{proposition}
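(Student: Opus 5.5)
The plan has two parts: first show $A(q(x))=A(q'(x))$ bitwise, then run a standard coupling/induction over transcript prefixes.

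\textbf{Step 1: the centered vectors coincide.} Fix $x$ and write $q=q(x)$, $q'=q'(x)$, $c=c(x)$.
\begin{itemize}
\item Since $q'=q+c\mathbf1$ exactly over the reals and both vectors are finite, the maxima satisfy $\max_jq'_j=\max_jq_j+c$.
\item Hence for each coordinate $i$ the exact real differences agree: $q'_i-\max_jq'_j=q_i-\max_jq_j$.
\item Correct rounding is a function of the exact real result, so $\fl$ returns the same binary32 value for both.
\item The one delicate case is an exact zero difference, where the sign of the zero might differ. The hypothesis of consistent signed-zero handling fixes this: both services produce the same signed zero for the same exact input.
\end{itemize}
Therefore $A(q)=A(q')$ as bit patterns. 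Step 1 uses only the exact-shift relation and correct rounding, so no binade-boundary analysis from Section~\ref{sec:leakage} is needed here.

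\textbf{Step 2: transcripts have equal law.} Model an adaptive observer as a randomized policy: the query at step $n+1$ is drawn from a kernel that depends on the history $h_n=(x,b^{(1)},o^{(1)},\dots,b^{(n)},o^{(n)})$ and on the observer's randomness. By hypothesis, the service's response law given $(h_n,x,b^{(n+1)})$ depends on the scores only through $A(q(x))$. By Step 1 the two services therefore share the same conditional response kernel at every step, and this includes tie-breaking and internal randomness.

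For a finite budget $T$, the joint law of $h_T$ is the product (Ionescu--Tulcea) composition of
\begin{itemize}
\item the input law,
\item the observer's query kernels, and
\item the response kernels.
\end{itemize}
Every factor is identical for the two services, so by induction on $n$ the laws of $h_n$ agree for all $n\le T$. Equivalently, couple the two executions with shared observer randomness and shared service randomness; this is the same argument already used for ideal queries after~\eqref{eq:ideal}.

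\textbf{Step 3: stopping rules.} This is the step I expect to need the most care. Let $\tau$ be an almost-surely finite stopping time with respect to the history filtration. The stopped transcript $h_\tau$ is a measurable function of the infinite sequence $(h_n)_{n\ge1}$. By Step 2 the finite-dimensional distributions of $(h_n)$ agree for every $n$, so the laws on path space agree by Kolmogorov extension (a $\pi$-$\lambda$ argument). Consequently the laws of $h_\tau$ agree.

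A more elementary route avoids path space. For any event $E$, write
\[
\Pr(h_\tau\in E)=\sum_n\Pr(\tau=n,\,h_n\in E).
\]
Each summand is determined by the law of $h_n$, because $\{\tau=n\}$ is $h_n$-measurable. The sum over $n$ covers total probability one since $\tau<\infty$ almost surely. Hence the stopped-transcript probabilities coincide for every $E$.
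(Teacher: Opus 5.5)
Your proposal is correct and follows the paper's own proof: the exact differences agree, $q'_i-\max_jq'_j=q_i-\max_jq_j$, so correct rounding (together with the signed-zero hypothesis) gives $A(q')=A(q)$ bit for bit, and a coupling/induction over histories then yields equal transcript laws. Your decomposition $\Pr(h_\tau\in E)=\sum_n\Pr(\tau=n,\,h_n\in E)$ just spells out the paper's one-line remark that a common stopping rule gives identical stopped transcripts.
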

\begin{proof}
For each input, $q'_i-\max_jq'_j=q_i-\max_jq_j$. Correct rounding therefore gives $A(q')=A(q)$ bit for bit. Couple the observer's randomness and the services' responses. Equal histories give the same next query and response distribution, so induction gives identical histories at every finite round. A common stopping rule yields identical stopped transcripts.
\end{proof}

The exact-shift condition applies to the stored vectors. If a common shift is rounded earlier, it can already change their relative scores. For example, binary32 $q=(1,1-2^{-24})$ centers to $(0,-2^{-24})$, whereas componentwise rounding of $q+2^{24}\mathbf1$ gives $(2^{24},2^{24})$, which centers to $(0,0)$. The prior rounding has changed the relative scores, so the proposition's exact-shift premise no longer holds.

When the maximum is subtracted \emph{after} bias addition, rounding has already mapped the candidates to potentially different vectors. A later operation that preserves the winning token retains the distinguishing response. Stable softmax implementations already subtract a maximum to avoid overflow~\cite{goldberg1991floating,blanchard2021softmax}; the relevant question is its position relative to the exposed control.

\subsection{Effect on Distinguishing Queries}\label{sec:defense-evaluation}
The suppression experiment compares the original processor with subtraction of the maximum score either before or after bias addition. We take eight cached positions from each of two FP32 cohorts, Pythia-14M and Qwen3-0.6B, round their complete score vectors to grids $2^{-8}$ and $2^{-6}$, and apply shifts $-0.25,-0.125,0,0.125,0.25$. This gives 160 conditions in 32 groups, with bitwise-identical complete max-centered and FP32 log-softmax vectors within each group. We freeze recovery predictions and replay the union of observed bias payloads across shifts.

Figure~\ref{fig:defense} compares cross-shift winner disagreements for the original pipeline (Original), max subtraction before bias addition (Pre), and max subtraction afterward (Post). Applying 9,742 union payloads to the five shifts in each group gives 48,710 condition--payload evaluations and 38,968 cross-shift comparisons. Pre removes all 2,578 original disagreements; Post retains them.

\begin{figure}[t]
\centering\includegraphics[width=\columnwidth]{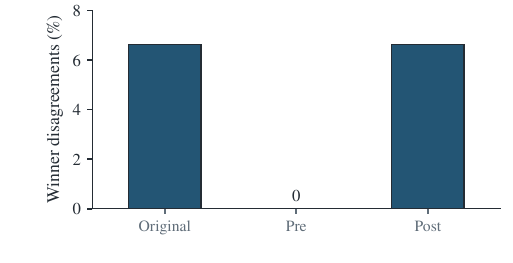}
\caption{Cross-shift winner disagreements under three processing orders. Pre and Post apply max subtraction before and after bias addition, respectively. Rates use 38,968 comparisons: 2,578 disagreements for Original and Post, and 0 for Pre. The experiment uses fixed-grid decoder groups; unperturbed probabilities and argmax decisions remain bitwise unchanged in all 160 conditions.}
\label{fig:defense}
\end{figure}

Preprocessing changes 10,038 of the 48,710 original \emph{biased} outputs. This is expected for payloads selected to expose fine rounding behavior: suppressing the channel changes those responses. In contrast, all 160 unperturbed argmax, softmax, and log-softmax outputs remain bitwise identical.

\subsection{Bias-Control Preservation}
To measure bias-control preservation, we use 16 evenly spaced inputs from each of the six serving configurations: 96 original cached vectors. Each receives 96 random sparse biases, with eight tokens from the full vocabulary or top 64 and 16 draws per support and amplitude $0.5,2,10$. It also receives 33 target-only biases $\fl(d+kU/8)$, $k=-16,\ldots,16$, where $d$ is the top-two gap and $U$ is the binary32 spacing at their maximum absolute score. We compare processing orders, complete FP32 probabilities, and two fixed offline tie-breaking rules: smallest or largest token ID. All payloads satisfy $H=100$. The inputs, seed, and bias-generation rules are fixed before execution; Appendix~\ref{app:repro} gives the details.

Pre-bias max subtraction preserves all 96 unperturbed greedy decisions and complete probability vectors bit for bit. It changes no greedy decision across 9,216 random biases; the mean complete-probability $L_1$ difference is $6.08\times10^{-8}$ and its maximum is $9.41\times10^{-7}$. The biases themselves change the unperturbed winner in 1,776 cases, including 1,737 of 4,608 top-64 controls.

Among 3,168 biases near the comparison threshold, preprocessing changes 738 maximizing sets and 369 greedy decisions under each fixed tie-breaking rule. Every change separates an original rounded tie; no strict-winner reversal occurs. Post-bias subtraction preserves all decisions and complete probabilities. Independent NumPy replay reproduces every FP32 processing vector. Thus, fine control near a decision boundary can change token selection even when the probability difference is small.

Processing order also matters for other controls. Repetition penalties can depend on the common additive component, so moving max subtraction across such a penalty can change its effect.

\section{Related Work}
\subsection{Prediction-Interface and Language Model Extraction}
Tram\`er et al.\ established extraction from prediction interfaces~\cite{tramer2016stealing}; subsequent work studies high-fidelity extraction and its relationship to recovered parameters~\cite{jagielski2020highaccuracy,carlini2020cryptanalytic}. 

For language models, Morris et al.\ use biases and token responses to recover probability information~\cite{morris2023inversion}. Carlini et al.\ recover output-projection information and develop efficient logprob-free score-difference attacks~\cite{carlini2024stealing}. Finlayson et al.\ analyze the low-dimensional linear structure exposed by API logits~\cite{finlayson2024logits}. The softmax bottleneck is established~\cite{yang2018softmax}. Our contribution concerns the common additive component left unresolved by ideal normalized or biased-argmax observations.

Geometric output signatures use constraints shared across model inputs. Finlayson et al.\ exploit an ellipse associated with the output projection for source verification~\cite{finlayson2025signature}. We study a per-input observation that recovers original logit values.

\subsection{Finite Precision and Numerical Side Channels}
Numerical analysis studies rounding and evaluation order~\cite{goldberg1991floating,higham2002accuracy,ieee2019floating}, including stable softmax~\cite{blanchard2021softmax}. Andrysco et al.\ demonstrate operand-dependent floating-point timing leakage, including subnormal arithmetic~\cite{andrysco2015subnormal}. We instead observe token selection under chosen biases near a normal exponent boundary.

Machine-learning implementations also expose execution-dependent signatures. iNNformant uses numerical differences in inference for device identification~\cite{innformant2021}, and HSPI studies hardware signatures in model outputs~\cite{hspi2025}. Our setting fixes the implementation and recovers a score component through chosen decoder controls.

Quantamination exploits shared dynamic-quantization parameters to leak data across a batch~\cite{foerster2026quantamination}, while FloatDoor modifies a model to turn platform-dependent numerical differences into a backdoor trigger~\cite{loose2026floatdoor}. Our mechanism requires neither a co-batched victim input nor modified model weights; it recovers the numerical values entering the exposed bias operation.

Kadav et al.\ study softmax-preserving output-head reparameterizations to improve output-head quantization~\cite{kadav2026softmax}. Their work establishes that ideal-equivalent representatives can behave differently after quantization. Our direction is observational: determine which representative-dependent information can be identified externally from token responses, and provide a local recovery construction.

\subsection{Decoder Controls and Interface Semantics}
Decoding controls operate on model logits. Repetition penalties originate in work such as CTRL~\cite{keskar2019ctrl}; Baichuan 2 discusses their dependence on logit values beyond relative differences~\cite{baichuan2023baichuan2}. This dependence on the common additive component is therefore established. Our recovery construction instead concerns the rounding of an additive bias, whose ideal real-arithmetic counterpart preserves the common-additive-shift ambiguity.

Transformers and vLLM expose implementations of logit processing and token selection~\cite{wolf2020transformers,vllm2026sampler}. We analyze the rounded addition used by such processors and evaluate recovery end to end in LLM serving systems, linking the numerical operator to observable token responses.

\section{Discussion}
\subsection{Range Identification and Future Work}\label{sec:range-discussion}
Opaque endpoints require checks of bias resolution, operation order, score stability, and reachable ranges. Controlled passive-equality experiments isolate the mechanism; its prevalence in deployments requires separate measurement.

Range selection affects both coverage and cost. Outside the calibrated intervals, erroneous candidates often preserve the score gap while shifting both logits toward a window boundary. The seven single-window false acceptances in Section~\ref{sec:native-evaluation} illustrate this ambiguity; the second window rejects those candidates by providing a contradictory response. Globally justified range tests and recovery across binades remain open problems under bounded biases.

Extending complete output-head extraction to larger vocabularies requires measuring the relative-coordinate cost, which dominates the present study. Variable execution conditions require a target definition that accounts for changing scores; independent response-noise guarantees apply to a fixed target.

\subsection{Research Ethics and Reproducibility}
We use public checkpoints, synthetic inputs, research-controlled CPU/GPU resources, and one output token per bias setting; the output-head experiment batches independent settings. Controlled variants modify locally held vectors. All fixed cases remain in the records, including unmet preconditions. The study collects no other users' data and involves no access-control bypass.

Our experimental records include input generators, pinned model revisions, query transcripts, frozen predictions, and CPU replay checks (Appendix~\ref{app:repro}).

We are preparing disclosure materials for the vLLM, SGLang, and llama.cpp maintainers, including minimal reproductions, tested configurations, and the pre-bias centering analysis. We plan to coordinate validation and mitigation with the maintainers.

\noindent\textbf{Use of generative AI.}
We used ChatGPT and Codex to assist literature review, theoretical development, experimental design, implementation, data analysis, and manuscript drafting, revision, and translation. The authors are responsible for reviewing and validating AI-assisted material and for the accuracy and originality of the final work.

\section{Conclusion}
Floating-point bias addition can disclose the common additive component of logits that ideal probabilities and ideal bias comparisons leave unidentified. We characterize this difference, provide a local binary32 recovery method with explicit query bounds, and verify exact recovery through complete generation calls. A token-only output-head experiment verifies the recovery of the shared row component under an invertible change of hidden basis. Preprocessing-order experiments identify a condition that suppresses disclosure. The results show that the numerical implementation of a token-only interface is part of its information-disclosure behavior.

\bibliographystyle{IEEEtran}
\bibliography{references}

@inproceedings{morris2023inversion,
 author={Morris, John X. and Zhao, Wenting and Chiu, Justin T. and Shmatikov, Vitaly and Rush, Alexander M.},
 title={Language Model Inversion},
 booktitle={International Conference on Learning Representations}, year={2024}}

@inproceedings{carlini2024stealing,
 author={Carlini, Nicholas and Paleka, Daniel and Dvijotham, Krishnamurthy Dj and Steinke, Thomas and Hayase, Jonathan and Cooper, A. Feder and Lee, Katherine and Jagielski, Matthew and Nasr, Milad and Conmy, Arthur and Wallace, Eric and Rolnick, David and Tram{\`e}r, Florian},
 title={Stealing Part of a Production Language Model},
 booktitle={Proceedings of the 41st International Conference on Machine Learning},
 series={Proceedings of Machine Learning Research}, volume={235}, pages={5680--5705}, year={2024}}

@inproceedings{finlayson2024logits,
 author={Finlayson, Matthew and Ren, Xiang and Swayamdipta, Swabha},
 title={Logits of {API}-Protected {LLMs} Leak Proprietary Information},
 booktitle={Conference on Language Modeling}, year={2024}}

@inproceedings{finlayson2025signature,
 author={Finlayson, Matthew and Ren, Xiang and Swayamdipta, Swabha},
 title={Every Language Model Has a Forgery-Resistant Signature},
 booktitle={International Conference on Learning Representations}, year={2026}}

@misc{kadav2026softmax,
 author={Kadav, Asim and Flores, Christian and Arora, Chirag and Kotte, Varun and Zheng, Hongbo and Yan, Lan and Shanmugasundaram, Priya and King, Tracy Holloway},
 title={Softmax Reparameterization for Output-Head Quantization},
 year={2026}, howpublished={arXiv:2609.31291}, note={Version 2}}

@misc{keskar2019ctrl,
 author={Keskar, Nitish Shirish and McCann, Bryan and Varshney, Lav R. and Xiong, Caiming and Socher, Richard},
 title={{CTRL}: A Conditional Transformer Language Model for Controllable Generation},
 year={2019}, howpublished={arXiv:1909.05858}}

@misc{baichuan2023baichuan2,
 author={Yang, Aiyuan and Xiao, Bin and Wang, Bingning and others},
 title={Baichuan 2: Open Large-scale Language Models},
 year={2023}, howpublished={arXiv:2309.10305}}

@inproceedings{innformant2021,
 author={Schl{\"o}gl, Alexander and Kupek, Tobias and B{\"o}hme, Rainer},
 title={{iNNformant}: Boundary Samples as Telltale Watermarks},
 booktitle={Proceedings of the 2021 ACM Workshop on Information Hiding and Multimedia Security},
 pages={81--86}, publisher={ACM}, year={2021},
 doi={10.1145/3437880.3460411},
 url={https://doi.org/10.1145/3437880.3460411}}

@inproceedings{hspi2025,
 author={Zhang, Cheng and Foerster, Hanna and Mullins, Robert D. and Zhao, Yiren and Shumailov, Ilia},
 title={Hardware and Software Platform Inference},
 booktitle={Proceedings of the 42nd International Conference on Machine Learning},
 series={Proceedings of Machine Learning Research}, volume={267}, pages={74882--74895}, year={2025}}

@misc{foerster2026quantamination,
 author={Foerster, Hanna and Shumailov, Ilia and Zhang, Cheng and Zhao, Yiren and Hayes, Jamie and Mullins, Robert},
 title={Quantamination: Dynamic Quantization Leaks Your Data Across the Batch},
 year={2026}, howpublished={arXiv:2604.26505}}

@misc{loose2026floatdoor,
 author={Loose, Nils and Sander, Jonas and M{\"a}chtle, Felix and Eisenbarth, Thomas},
 title={{FloatDoor}: Platform-Triggered Backdoors in {LLMs}},
 year={2026}, howpublished={arXiv:2606.19535}}

@article{goldberg1991floating,
 author={Goldberg, David},
 title={What Every Computer Scientist Should Know About Floating-Point Arithmetic},
 journal={ACM Computing Surveys}, volume={23}, number={1}, pages={5--48}, year={1991},
 doi={10.1145/103162.103163}}

@misc{ieee2019floating,
 author={{IEEE}}, title={{IEEE} Standard for Floating-Point Arithmetic},
 howpublished={IEEE Std 754-2019}, year={2019},
 doi={10.1109/IEEESTD.2019.8766229},
 url={https://standards.ieee.org/ieee/754/6210/}}

@article{blanchard2021softmax,
 author={Blanchard, Pierre and Higham, Desmond J. and Higham, Nicholas J.},
 title={Accurately Computing the Log-Sum-Exp and Softmax Functions},
 journal={IMA Journal of Numerical Analysis}, volume={41}, number={4}, pages={2311--2330}, year={2021},
 doi={10.1093/imanum/draa038}}

@inproceedings{tramer2016stealing,
 author={Tram{\`e}r, Florian and Zhang, Fan and Juels, Ari and Reiter, Michael K. and Ristenpart, Thomas},
 title={Stealing Machine Learning Models via Prediction {APIs}},
 booktitle={25th USENIX Security Symposium}, year={2016}, pages={601--618}}

@inproceedings{biderman2023pythia,
 author={Biderman, Stella and Schoelkopf, Hailey and Anthony, Quentin Gregory and Bradley, Herbie and O'Brien, Kyle and Hallahan, Eric and Khan, Mohammad Aflah and Purohit, Shivanshu and Prashanth, Usvsn Sai and Raff, Edward and Skowron, Aviya and Sutawika, Lintang and Van Der Wal, Oskar},
 title={Pythia: A Suite for Analyzing Large Language Models Across Training and Scaling},
 booktitle={Proceedings of the 40th International Conference on Machine Learning},
 series={Proceedings of Machine Learning Research}, volume={202}, pages={2397--2430}, year={2023}}

@misc{qwen2025report,
 author={Yang, An and Li, Anfeng and Yang, Baosong and others}, title={{Qwen3} Technical Report},
 howpublished={arXiv:2505.09388}, year={2025}}

@inproceedings{wolf2020transformers,
 author={Wolf, Thomas and Debut, Lysandre and Sanh, Victor and others},
 title={Transformers: State-of-the-Art Natural Language Processing},
 booktitle={Proceedings of the 2020 Conference on Empirical Methods in Natural Language Processing: System Demonstrations},
 year={2020}, pages={38--45}, doi={10.18653/v1/2020.emnlp-demos.6}}

@misc{huggingface2026generation,
 author={{Hugging Face}}, title={Utilities for Generation},
 year={2026}, howpublished={Transformers 4.57.1 documentation}, note={Accessed 2026-10-02},
 url={https://huggingface.co/docs/transformers/v4.57.1/en/internal/generation_utils}}

@misc{openrouter2026parameters,
 author={{OpenRouter}}, title={{API} Parameters},
 year={2026}, note={Accessed 2026-10-01},
 url={https://openrouter.ai/docs/api_reference/parameters}}

@misc{vllm2026sampler,
 author={{vLLM Contributors}}, title={{vLLM} {V1} {GPU} Sampling Implementation},
 year={2026}, note={Commit cb4f016c5d02c9d483e6e3f1702d51da239b1a9b},
 url={https://github.com/vllm-project/vllm/blob/cb4f016c5d02c9d483e6e3f1702d51da239b1a9b/vllm/v1/worker/gpu/sample/sampler.py}}

@book{higham2002accuracy,
 author={Higham, Nicholas J.}, title={Accuracy and Stability of Numerical Algorithms},
 edition={2}, publisher={Society for Industrial and Applied Mathematics}, year={2002},
 doi={10.1137/1.9780898718027}}

@inproceedings{yang2018softmax,
 author={Yang, Zhilin and Dai, Zihang and Salakhutdinov, Ruslan and Cohen, William W.},
 title={Breaking the Softmax Bottleneck: A High-Rank {RNN} Language Model},
 booktitle={International Conference on Learning Representations}, year={2018}}

@inproceedings{jagielski2020highaccuracy,
 author={Jagielski, Matthew and Carlini, Nicholas and Berthelot, David and Kurakin, Alex and Papernot, Nicolas},
 title={High Accuracy and High Fidelity Extraction of Neural Networks},
 booktitle={29th USENIX Security Symposium}, year={2020}, pages={1345--1362},
 publisher={USENIX Association}}

@inproceedings{carlini2020cryptanalytic,
 author={Carlini, Nicholas and Jagielski, Matthew and Mironov, Ilya},
 title={Cryptanalytic Extraction of Neural Network Models},
 booktitle={Advances in Cryptology---CRYPTO 2020},
 series={Lecture Notes in Computer Science}, volume={12172},
 year={2020}, pages={189--218}, publisher={Springer},
 doi={10.1007/978-3-030-56877-1_7}}

@misc{radford2019gpt2,
 author={Radford, Alec and Wu, Jeffrey and Child, Rewon and Luan, David and Amodei, Dario and Sutskever, Ilya},
 title={Language Models are Unsupervised Multitask Learners},
 year={2019}, howpublished={OpenAI technical report},
 url={https://cdn.openai.com/better-language-models/language_models_are_unsupervised_multitask_learners.pdf}}

@misc{black2021gptneo,
 author={Black, Sid and Gao, Leo and Wang, Phil and Leahy, Connor and Biderman, Stella},
 title={{GPT-Neo}: Large Scale Autoregressive Language Modeling with Mesh-Tensorflow},
 year={2021}, howpublished={Zenodo}, doi={10.5281/zenodo.5297715},
 url={https://doi.org/10.5281/zenodo.5297715}}

@misc{zhang2022opt,
 author={Zhang, Susan and Roller, Stephen and Goyal, Naman and others},
 title={{OPT}: Open Pre-trained Transformer Language Models},
 year={2022}, howpublished={arXiv:2205.01068}}

@misc{bigscience2022bloom,
 author={{BigScience Workshop}},
 title={{BLOOM}: A 176B-Parameter Open-Access Multilingual Language Model},
 year={2022}, howpublished={arXiv:2211.05100}}

@inproceedings{benallal2025smollm2,
 author={Ben Allal, Loubna and Lozhkov, Anton and Bakouch, Elie and others},
 title={{SmolLM2}: When Smol Goes Big---Data-Centric Training of a Fully Open Small Language Model},
 booktitle={Conference on Language Modeling}, year={2025},
 url={https://openreview.net/forum?id=3JiCl2A14H}}

@inproceedings{andrysco2015subnormal,
 author={Andrysco, Marc and Kohlbrenner, David and Mowery, Keaton and Jhala, Ranjit and Lerner, Sorin and Shacham, Hovav},
 title={On Subnormal Floating Point and Abnormal Timing},
 booktitle={2015 IEEE Symposium on Security and Privacy}, year={2015},
 pages={623--639}}

@misc{kalamkar2019bfloat16,
 author={Kalamkar, Dhiraj and Mudigere, Dheevatsa and Mellempudi, Naveen and Das, Dipankar and Banerjee, Kunal and Avancha, Sasikanth and Vooturi, Dharma Teja and Jammalamadaka, Nataraj and Huang, Jianyu and Yuen, Hector and Yang, Jiyan and Park, Jongsoo and Heinecke, Alexander and Georganas, Evangelos and Srinivasan, Sudarshan and Kundu, Abhisek and Smelyanskiy, Misha and Kaul, Bharat and Dubey, Pradeep},
 title={A Study of {BFLOAT16} for Deep Learning Training},
 year={2019}, howpublished={arXiv:1905.12322}}

@article{wald1945sequential,
 author={Wald, Abraham},
 title={Sequential Tests of Statistical Hypotheses},
 journal={The Annals of Mathematical Statistics}, volume={16}, number={2},
 pages={117--186}, year={1945}, doi={10.1214/aoms/1177731118}}

@article{howard2021timeuniform,
 author={Howard, Steven R. and Ramdas, Aaditya and McAuliffe, Jon and Sekhon, Jasjeet},
 title={Time-uniform, nonparametric, nonasymptotic confidence sequences},
 journal={The Annals of Statistics}, volume={49}, number={2},
 pages={1055--1080}, year={2021}, doi={10.1214/20-AOS1991}}

@misc{he2025nondeterminism,
 author={He, Horace}, title={Defeating Nondeterminism in {LLM} Inference},
 howpublished={Thinking Machines Lab}, year={2025}, month={September},
 url={https://thinkingmachines.ai/blog/defeating-nondeterminism-in-llm-inference/}}

@inproceedings{vaswani2017attention,
 author={Vaswani, Ashish and Shazeer, Noam and Parmar, Niki and Uszkoreit, Jakob and Jones, Llion and Gomez, Aidan N. and Kaiser, {\L}ukasz and Polosukhin, Illia},
 title={Attention Is All You Need},
 booktitle={Advances in Neural Information Processing Systems}, volume={30}, year={2017}}

@inproceedings{kwon2023pagedattention,
 author={Kwon, Woosuk and Li, Zhuohan and Zhuang, Siyuan and others},
 title={Efficient Memory Management for Large Language Model Serving with {PagedAttention}},
 booktitle={Proceedings of the 29th Symposium on Operating Systems Principles},
 year={2023}, pages={611--626}, doi={10.1145/3600006.3613165}}

@inproceedings{zheng2024sglang,
 author={Zheng, Lianmin and Yin, Liangsheng and Xie, Zhiqiang and others},
 title={{SGLang}: Efficient Execution of Structured Language Model Programs},
 booktitle={Advances in Neural Information Processing Systems}, volume={37}, year={2024}}

@misc{llamacpp2026server,
 author={{ggml-org Contributors}}, title={llama.cpp Server}, year={2026},
 note={Commit 7f2dd88b0ac393357ae6a9e1992185a48c20b13b},
 url={https://github.com/ggml-org/llama.cpp/blob/7f2dd88b0ac393357ae6a9e1992185a48c20b13b/tools/server/README.md}}

@misc{karpathytinyllamas,
  author = {Andrej Karpathy},
  title = {{TinyLlamas: TinyStories-260K Checkpoint}},
  howpublished = {Hugging Face model repository},
  url = {https://huggingface.co/karpathy/tinyllamas},
  note = {Revision 0bd21da7698eaf29a0d7de3992de8a46ef624add, stories260K; accessed October 9, 2026}
}

\appendices
\section{Boundary Cases in Local Recovery}\label{app:boundary}
The recovery proof uses independently formed payloads. If a processor instead evaluates $\fl(\fl(q_i+B)+d)$, it can have a different rounding history from $\fl(q_i+\fl(B+d))$. The artifact checks the latter operation order and rejects payloads that are not exactly representable or exceed the allowed budget. This distinction matters most near the exponent boundary used by the probes.

For the tie-breaking priority test, $p=\max\{\ell,F+d_0+\Delta\}$ satisfies $p\le a$ under both hypotheses because $t>F$ and the scores lie on the $\Delta$ grid. The last pivot is at most $u+2\Delta\le P$. Its anchor bias is therefore nonnegative, and the first target bias is at most $H_0$; later target biases are smaller. When the true scores are equal, the original-anchor definition enforces anchor tie-breaking priority and the same test applies.

For residue identification, the two responses at $p_0,p_0+\Delta$ are $00,10,11,01$ for residues $0,1,2,3$, apart from $k=0$, which also produces $01$. The query at $p_0+3\Delta$ produces one for $k=0$ and zero for nonnegative residue-three candidates. If $p_0<u$, then $p_0\le u-\Delta\le P-3\Delta$, so that extra pivot is valid. If $p_0=u$, the algorithm returns before making it.

The lower-bound state count includes the unknown pairwise tie-breaking priority. For each anchor index $i$, the $i-1$ strictly lower targets admit two priorities, while the equal target admits only the original-anchor priority. Summing $2(i-1)+1$ over $i=1,\ldots,K-2$ gives $(K-2)^2$. If only the score pair must be returned and tie-breaking priority is a nuisance variable, the number of distinct pairs is $(K-2)(K-1)/2$, giving a 45-query counting lower bound for binary32. The 46-query comparison in the main text therefore explicitly includes tie-breaking priority. Allowing nonbinary responses also changes the argument.

\section{A Bounded-Bias Nonidentifiability Region}
\label{app:deadband}
\begin{proposition}[An explicit BF16 nonidentifiability region]
\label{prop:deadband}
Let $q$ be a finite BF16 vector, let $M=\max_iq_i\in\{360,362,\ldots,410\}$, and define $q'_i=\operatorname{RN}_{\mathrm{BF16}}(q_i-2)$. Under~\eqref{eq:oracle}, allowing arbitrary simultaneously biased tokens with every payload in $[-100,100]$, the complete sets of maximizing indices are equal for $q$ and $q'$ for every query. They therefore have identical adaptive winner transcripts under the same fixed tie-breaking order.
\end{proposition}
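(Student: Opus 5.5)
The plan is to compare the two services token by token under an arbitrary fixed payload vector $b$ with $|b_i|\le100$. Write $\beta_i=\fl(b_i)$, so that $|\beta_i|\le100$ by monotonicity of rounding. Write $y_i=q_i+\beta_i$ and $y'_i=q'_i+\beta_i$ for the exact biased sums. The goal is to show that the sets $\argmax_i\fl(y_i)$ and $\argmax_i\fl(y'_i)$ coincide. Once the maximizing sets agree for every query, the same fixed tie-breaking order selects the same winner. The induction on histories used for Proposition~\ref{prop:prebias} then gives identical adaptive transcripts. That step needs no new idea, so the work is the single-query comparison.

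\textbf{Step 1: BF16 structure.} BF16 has 8-bit significand precision. Its spacing is $2$ on $[256,512)$ and $1$ on $[128,256)$. Hence every BF16 value $q_i\ge 130$ satisfies $q'_i=q_i-2$ exactly; in particular $\max_iq'_i=M-2$. For every other coordinate, monotonicity of rounding gives $q'_i\le q_i$. I will call token $i$ \emph{competitive} if $q_i\ge M-200\ge160$. Competitive tokens are exactly shifted by $2$.

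\textbf{Step 2: a floor on the maximum.} Give the argmax token of $q$ the bias $\beta$, which is at least $-100$. Its biased score is at least $\fl(M-100)\ge260$ in the original service, and at least $258$ in the shifted one. Every token in either maximizing set therefore has rounded score at least $258$, so its exact sum exceeds $258$ minus half a binary32 spacing. The target claim is the following. Every token with $y_i<258$ has $\fl(y_i)\le258<260$ and $\fl(y'_i)\le\fl(y_i-2)\le256<258$. Such tokens therefore lie in neither maximizing set. This covers all noncompetitive tokens: for those, $y_i\le q_i+100\le M-100-2<258$ once one checks the grid; I expect this to need the evenness of $M$ and a short case on $q_i$.

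\textbf{Step 3: the relevant tokens.} These are the tokens with $y_i\ge258$. They are competitive, so $y'_i=y_i-2$. Both $y_i$ and $y_i-2$ then lie in $[256,512)$, because $y_i\le M+100\le510$. On that interval the binary32 spacing is $2^{-15}$, and $2$ is an even multiple of this spacing. Translation by $2$ therefore maps the binary32 grid on the interval to itself and preserves the parity of the last significand bit. Round-to-nearest-even consequently commutes with the shift: $\fl(y_i-2)=\fl(y_i)-2$. Every relevant rounded score drops by exactly $2$, and the irrelevant ones stay below both maxima, so the maximizing sets coincide.

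\textbf{Main obstacle.} The delicate part is the boundary bookkeeping in Steps 2--3. Shifted sums must never cross the binade edge at $256$ while still being candidates for the maximum, and exact sums just below $258$ must not round up to tie the shifted maximum. I will handle this first by making the margins explicit. The shifted maximum is at least $258$ and $y_i-2$ is above $256$, which leaves room for a half-ULP. I will also check the endpoint cases $M=360$ and $M=410$ separately, together with payloads exactly at $\pm100$. No overflow or subnormal issues arise in this range.
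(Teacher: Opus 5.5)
Your overall structure works, but one claim in Step 2 is false, and Step 3 relies on it. You assert that every noncompetitive token has $y_i\le q_i+100\le M-102<258$. Noncompetitive only gives $q_i\le M-201$, because BF16 values in $[128,256)$ are integers and $M-201$ is representable. Hence $y_i\le M-101$, not $M-102$. Moreover $M-102\ge258$ for every admissible $M$, so even the bound you wrote is not below $258$. Concretely, $M=360$, $q_i=159$, $b_i=100$ gives a noncompetitive token with $y_i=259$, and $M=410$, $q_i=209$, $b_i=100$ gives $y_i=309$. So the inference ``tokens with $y_i\ge258$ are competitive'' in Step 3 is false, and no grid check or parity case on $M$ will rescue it.

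The repair is already in your Step 1. Any token with $y_i\ge258$ has $q_i\ge y_i-100\ge158\ge130$, so $q'_i=q_i-2$ exactly and $y'_i=y_i-2$. Your commutation argument $\fl(y_i-2)=\fl(y_i)-2$ on $[256,512)$ then applies whether or not the token is competitive, and the competitive/noncompetitive split can be dropped.

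A smaller slip: $\fl(y'_i)\le\fl(y_i-2)$ is not true for every token with $y_i<258$. For example, $q_i=-255$ gives $q'_i=-256>q_i-2$ under ties-to-even. What you actually need follows from two cases. If $q_i\ge130$, then $y'_i=y_i-2<256$. If $q_i<130$, then $y'_i\le q_i+100<230$. Either way $\fl(y'_i)\le256<258\le\fl(M-2+\beta_j)$ for a top token $j$.

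With these fixes your proof is complete, and it differs from the paper's route.

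\emph{The paper's route.} It splits by score, $S=\{i:q_i\ge M-200\}$. It excludes tokens outside $S$ by a relative comparison with a top token: $M-101<M-100$ in the original world, and $M-103<M-102$ in the shifted one, using $q'_i\le\operatorname{RN}_{\mathrm{BF16}}(M-203)=M-203$. Inside $S$ it argues through membership in the translated rounding cell of the maximum exact sum.

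\emph{Your repaired route.} You split by the exact biased sum at the absolute level $258$; the hypothesis $M\ge360$ places both maxima at or above this level. You then prove the pointwise identity $\fl(y'_i)=\fl(y_i)-2$ for every token that can reach the maximum. This avoids the $M-201$/$M-203$ bookkeeping and makes $\max_i\fl(y'_i)=\max_i\fl(y_i)-2$ explicit. The paper's version needs only the rounding cell of the maximum, not a statement about every token.
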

\begin{proof}
Let $S=\{i:q_i\ge M-200\}$. For $i\in S$, $160\le q_i\le410$ and subtracting two is exact in BF16, including across $256$. Hence $q'_i=q_i-2$. Every other score is at most $M-201$, and its shifted-and-rounded value is at most $M-203$. With biases bounded by $100$, such a token remains strictly below a maximum-score token in both worlds and cannot win.

For $i\in S$, put $\beta_i=\fl(b_i)$ and $x_i=q_i+\beta_i$, where the latter denotes the exact real sum. Its maximum $x_*$ lies in $[M-100,M+100]\subset[260,510]$. The translated maximum lies in $[258,508]$. Both maxima and their rounding cells are strictly inside the binary32 binade $[256,512)$, with spacing $\delta=2^{-15}$. Translation by $-2=-65536\delta$ shifts each relevant rounding cell by an even number of grid steps, preserving ties-to-even endpoint membership. Therefore
\[
 \begin{gathered}
 \fl(x_i)=\fl(x_*)\\
 \iff\quad \fl(x_i-2)=\fl(x_*-2).
 \end{gathered}
\]
This equivalence also holds for $x_i$ outside the maximum's binade because membership is in the translated maximum rounding cell. Thus the complete maximizing sets agree. The same tie-breaking order gives equal responses to every query; induction on the resulting histories gives identical adaptive transcripts. Rounding the most negative finite BF16 value after subtracting two remains finite, so $q'$ is a valid score vector.
\end{proof}

\section{Negative-Binade Construction}
\label{app:negative}
The sign of the scores affects which side of an exponent boundary has finer
spacing. This distinction permits a separate construction for negative model
scores; no common score shift is inserted into the service.

Let the addition format have binary precision $p\ge3$, and let $S=2^e$ be public.
Set $\delta=S/2^{p-1}$, $K=\lfloor\min(H,S)/\delta\rfloor$,
$H_0=K\delta$, and $F=-S-H_0$. Assume $K\ge2$, that the original greedy winner has
score $a$, and that the selected target has score $t$, with
\[
 a\ge t>F,\qquad a\in[\ell,u]\subset(F,-S-\delta].
\]
The endpoints are on the $\delta$ lattice. Both scores therefore lie in the
negative binade $(-2S,-S)$. The interface uses separately cast biases, a single
correctly rounded addition, and an unknown fixed global tie-breaking order. All constructed
biases are nonnegative and at most $H_0$. The format must represent $\delta/2$;
we exclude overflow and flushing of the required subnormal operands.

The following proves Theorem~\ref{thm:negative} and its precision-dependent extension.

\begin{proof}
Write $d=a-t$. Adding $j\delta$ to the target gives a monotone threshold because
all score sums near the comparison remain exactly on the original lattice. Let
$j$ be the first winning integer. Binary search costs at most
$\lceil\log_2K\rceil$ queries and leaves two hypotheses:
$d=d_0=(j-1)\delta$ with anchor tie-breaking priority, or $d=d_0+\delta$ with target tie-breaking priority. The search endpoints follow from the original winner and
$t+H_0>-S>a$.

Set $r=F+d_0+\delta$ and submit biases
\[
 b_a=-S-r,\qquad b_t=-S-r+d_0+\delta/2.
\]
These are legal: $b_t=H_0-\delta/2$ and $0\le b_a\le H_0$.
Under anchor priority, $a-r=t-F-\delta\ge0$, so the two exact sums are $x$ and
$x+\delta/2$, with $x=-S+(a-r)\ge-S$. Both are representable in the finer region,
and the target wins. Under target priority, $a-r=t-F\ge\delta$, and the exact
sums are $x$ and $x-\delta/2$, both strictly in the finer region; the anchor wins.
Thus one query determines the true gap and tie-breaking priority.

For a pivot $r$ in the remaining legal range, first align the two exact sums to
$x=-S+(a-r)$ using biases $-S-r$ and $-S-r+d$. Increase the bias of the
lower-priority token by $\delta/2$. With $k=(a-r)/\delta$, its winning indicator is
\[
 g(k)=\begin{cases}
 1,&k\ge0,\\
 1,&k<0\text{ and }k\text{ odd},\\
 0,&k<0\text{ and }k\text{ even}.
 \end{cases}
\]
For $k\ge0$, the half-step is exactly representable. For $k<0$, it is a midpoint
in the coarser binade; ties-to-even rounds upward precisely for odd $k$.
Choose the initial residue-test pivot $r=u+\delta$. It satisfies $r>a$ and $r\le-S$,
so the response identifies the parity of $(a-r)/\delta$ without a boundary
ambiguity. Restrict subsequent candidate pivots to the inferred parity class.
Then $k$ is even, and the response is exactly $\mathbf 1[a\ge r]$. Binary search
on at most $\lceil N/2\rceil$ candidates completes recovery.

Every parity-test or search pivot is at least $F+d+\delta$. Thus its target payload,
even after the half-step increment, is at most $H_0-\delta/2$; the other payload
is nonnegative and no larger. All payloads are exact multiples of $\delta/2$
within $[0,S]$, hence representable. The anchor is never decreased, so no
unmodified token can displace the original winner under the fixed global tie
order. Each response is consequently binary.

Since $K\le2^{p-1}$ and $N\le K-1$, the upper bound is $2p-1$. For the lower
bound, take $H_0=S$ and scores $a=F+i\delta$, $t=F+j\delta$ with
$1\le j\le i\le K-1$. There are $(K-1)^2$ allowed score/priority triples. A binary
decision tree requires at least $\lceil\log_2(K-1)^2\rceil=2p-2$ queries to
identify all of them. This count concerns triple recovery, not score-only
recovery or an oracle with more than two possible responses.
\end{proof}

\noindent\textbf{Other known rounding directions.}
The finer-region gap-disambiguation query is exact and does not depend on the
rounding direction. The final comparison is adapted to the specified rounding direction. Under rounding toward negative
infinity, or nearest with ties away from zero, the lower-priority half-step
increment wins exactly when $a\ge r$. Under rounding toward positive infinity,
or toward zero (all constructed sums are negative), instead decrease the
higher-priority token's aligned bias by $\delta/2$. The lower-priority token then
wins exactly when $a>r$. The latter probe is legal for search pivots
$r\le u\le-S-\delta$. Ordinary binary search gives the same $2p-1$ uniform
bound. These are algorithms for a \emph{known} rounding direction, not an
identification theorem for an unknown runtime mode. Nearest-away is analyzed as
an IEEE-defined direction; its support is not assumed for binary hardware.

\section{Storage Precision and Control Resolution}
\label{app:mixed}
Stored-score precision can also reduce the candidate set. Let $s$ be the significand precision of a known score-storage format and let $p\ge s+2$ be the bias and addition precision. The scores lie in $g\mathbb Z$, where $g=P2^{-s}$ below a positive boundary $P$ and $g=S2^{1-s}$ below a negative boundary $-S$. The appropriate local range and representability conditions still apply. Storage precision and arithmetic precision are distinct interface properties~\cite{kalamkar2019bfloat16}.
\begin{proposition}[Recovery with known score precision]\label{prop:mixed}
Let $B=P$ or $B=-S$ be the chosen boundary and $g$ the corresponding storage spacing defined above. Set $H_0=Kg\le H$ and $F=B-H_0$, where $K$ is an integer and $2\le K\le2^{s-1}$. Suppose $a,t\in g\mathbb Z$, $a\ge t>F$, and $a\in[\ell,u]\subset(F,B-g]$, with endpoints on the storage grid. Assume the required bias increments are representable and the arithmetic neither overflows nor flushes required subnormals. With the original anchor, unknown fixed tie-breaking priority, and round-to-nearest, ties-to-even addition, exact pair-and-priority recovery requires at most
\begin{equation}
 \lceil\log_2K\rceil+1+\lceil\log_2N\rceil\le2s-1
 \label{eq:mixed-cost}
\end{equation}
binary recovery queries, where $N$ is the number of remaining anchor grid points after recovering the gap. BF16-to-binary32 and binary16-to-binary32 interfaces admit bounds of 15 and 21, respectively.
\end{proposition}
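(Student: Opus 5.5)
The plan is to rerun the structure of Theorem~\ref{thm:recovery} and Theorem~\ref{thm:negative} on the storage grid $g\mathbb Z$ rather than the addition grid. The margin $p\ge s+2$ is used twice. First, it makes both half-steps $g/2$ and the addition-precision ulp representable and exact where needed. Second, it forces every aligned sum on the coarse side of the boundary to be an even multiple of the coarse spacing. The second fact removes the residue (or parity) step entirely, because restricting pivots to the storage grid automatically lands in a single residue class. Write $\Delta_f$ for the addition-format spacing on the fine side of $B$. The positive boundary $B=P$ has its fine side below $P$; the negative boundary $B=-S$ has its fine side above $-S$. On the coarse side the spacing is $2\Delta_f$. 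In both cases $g=2^{p-s}\Delta_f\ge4\Delta_f$, so $g$ is an even multiple of $2\Delta_f$.

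\textbf{Gap search.} First give only the target the bias $jg$ for $0\le j\le K$. Every sum near the comparison threshold lies on $g\mathbb Z$ inside the fine side of $B$, so it is exact. The response is therefore a monotone threshold in $j$. The endpoints follow from the original-winner property and from $t+H_0>B>a$ in the positive case (and its analogue in the negative case). Binary search costs $\lceil\log_2K\rceil$ queries and leaves the two hypotheses $d=d_0$ with anchor priority or $d=d_0+g$ with target priority, exactly as in~\eqref{eq:gap-hypotheses}.

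\textbf{One priority query.} Next choose a pivot $r$ on the far side of every legal anchor, so that both sums fall strictly on the fine side. In the positive case take $r=u+g\le P$. Submit anchor bias $B-r$ and target bias $B-r+d_0+g/2$. Under the first hypothesis the target's sum exceeds the anchor's by $g/2$. Under the second it is $g/2$ smaller. Both sums are exact, so the winner is strict and reveals the hypothesis. Legality follows from $d<a-F\le u-F$, which gives a target payload below $H_0$ with a nonnegative anchor payload. The negative case mirrors the appendix construction.

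\textbf{Monotone comparison without a residue step.} For a storage-grid pivot $r$, align the exact sums at $x=B+(a-r)=B+kg$ using biases $B-r$ and $B-r+d$. Then add $\Delta_f$ to the lower-priority token. On the fine side, $x+\Delta_f$ is representable and the lower-priority token wins strictly. On the coarse side, $x$ differs from $B$ by $k\cdot2^{p-s-1}$ coarse ulps, and this count is even because $p-s-1\ge1$. The boundary $B=\pm2^e$ has even coarse significand, so $x$ is an even grid point. The point $x+\Delta_f$ (or $x-\Delta_f$ toward the boundary) is then a midpoint, and ties-to-even returns $x$. This produces a rounded tie that the higher-priority token wins. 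The response is thus exactly $\mathbf 1[a<r]$ for $B=P$ and $\mathbf 1[a\ge r]$ for $B=-S$. Ordinary binary search over the $N$ remaining grid points takes $\lceil\log_2N\rceil$ queries, and then $t=a-d$.

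\textbf{Counting and the main difficulty.} Counting uses $K\le2^{s-1}$ and $N\le K$, giving $(s-1)+1+(s-1)=2s-1$: $15$ for $s=8$ and $21$ for $s=11$. The main obstacle is the parity step on the coarse side. It requires checking that the even-significand claim holds uniformly for every legal pivot, including $k=0$ at the boundary itself. It also requires verifying that all payloads $B-r$, $B-r+d$, $B-r+d\pm g/2$ and $B-r+d+\Delta_f$ are representable, lie in $[0,H_0]$, and never decrease the anchor, so that third tokens stay excluded. I would settle these by writing each payload as an integer multiple of $\Delta_f$ with bounded size, reusing the payload arguments of Appendices~\ref{app:boundary} and~\ref{app:negative}.
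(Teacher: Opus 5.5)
Your proposal is correct and follows the paper's proof. The shared steps are a storage-grid gap search, one half-step query separating the two gap/priority hypotheses, storage-grid pivots (the paper restricts them to $[F+d+g,\,B-g]$, which gives the payload bounds you defer) with a lower-priority increment $\Delta_f$ so that $p\ge s+2$ makes every coarse-side aligned sum an even grid point and ties-to-even yields $\mathbf 1[a<r]$ or $\mathbf 1[a\ge r]$ without a residue step, and the count $(s-1)+1+(s-1)$.

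The one difference is cosmetic: the paper resolves the priority with the target-only bias $(j-\tfrac12)g$, because $a\pm g/2$ is already exactly representable where the scores sit, so shifting both scores toward the boundary is unnecessary. Two slips for $B=-S$ do not affect your conclusions: there $g=2^{p-s+1}\Delta_f$, and the gap-search sums lie on the coarse side, where they are exact because they are storage-grid points of that binade.
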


\begin{proof}
Search the target-bias threshold on the coarse storage grid. If the first winning
bias is $jg$, the remaining hypotheses are $d=(j-1)g$ with anchor tie-breaking priority
and $d=jg$ with target tie-breaking priority. A single bias $(j-1/2)g$ on the target gives
score $a+g/2$ under the former and $a-g/2$ under the latter. Both are exactly
representable in the arithmetic format, so this query identifies the gap and
tie-breaking priority.

For a positive boundary $P$, let $\Delta=P/2^p$ be the arithmetic spacing below
$P$. The storage spacing satisfies $g=2^{p-s}\Delta$. Choose pivots on the
storage grid, align the scores at $P+(a-r)$, and increase the lower-priority
token's bias by $\Delta$. Since $(a-r)/\Delta$ is a multiple of four, the
nearest-even response rule reduces directly to $\mathbf1[a<r]$; no residue-identification
queries are needed. For a negative boundary $-S$, let
$\delta=S/2^{p-1}$. Align at $-S+(a-r)$ and use an increment $\delta/2$ instead.
Here $(a-r)/\delta$ is even, so the winning indicator is
$\mathbf1[a\ge r]$. Each case yields ordinary binary search on the storage
candidates.

After finding the gap, intersect the public interval with
$[F+d+g,\,\text{boundary}-g]$. All search biases are nonnegative and bounded by
$H_0$, including the small increment, because $g$ exceeds that increment. They
are exactly representable in the finer arithmetic format. Monotonicity and the
fixed global tie-breaking order exclude untouched tokens, as in the binary32
construction. The gap and anchor searches each cost at most $s-1$ queries.
\end{proof}

The lattice condition applies to scores entering bias addition; advertised weight or forward precision alone does not establish it. The proposition also requires biases to retain the finer arithmetic precision.

\noindent\textbf{A finite-alphabet control.}
For $q=(30,28)$ and $q'=q-\tfrac18\mathbf1$, fine binary32 biases produce a
winner difference. When biases are instead restricted to all finite BF16 values
in $[-100,100]$, and addition remains binary32, the two pair oracles are
identical for both tie-breaking priorities. The same holds for the binary16 bias alphabet.
We verified this finite statement by comparing the first-winning target-bias
index for every anchor bias: 34,193 BF16 values and 44,161 binary16 values.
Monotonicity extends the threshold comparison to every bias pair in each specified Cartesian alphabet for these two vectors. If both the bias
and addition precision are lowered, format-scaled distinguishing probes exist
again.

\noindent\textbf{A sufficient condition for exact addition.}
Let $g$ be a representable power of two and $p$ the addition precision. Suppose every hidden score and accepted bias is a multiple of $g$, and $q'=q+c\mathbf1$ with $c\in g\mathbb Z$. If every possible biased sum in both vectors has magnitude at most $2^p g$, with no overflow, each sum is an exactly representable $ng$ with $|n|\le2^p$. All additions are therefore exact. The biased vectors differ by $c\mathbf1$, so their maximizer sets and winners coincide under the same fixed tie-breaking order. Identical responses at each round give identical adaptive transcripts.

This representability condition identifies when control quantization removes the nonlinearity used by the channel. Coarse biases alone are insufficient: with
binary32 scores $q=(2^{24},2^{24}-2)$, $q'=q-2\mathbf1$, and integer biases
$(0,3)$, the first vector yields equal rounded scores; the second has a strict target
winner. Here the exact-lattice magnitude condition is violated.

\section{Adaptive Selection Protocol}\label{app:anchor-selection}
For Table~\ref{tab:evidence}, enumerate $B=\pm2^e$, $e=-6,\ldots,10$, in ascending exponent order, positive before negative. Use each full local interval and require candidates to reproduce the pair's combined transcript. Set $\eta=|B|/2^{24}$ for $B>0$ and $\eta=|B|/2^{23}$ otherwise. Probe each candidate with
\[
 b_0=\fl(B-\hat a+s\eta),\quad
 b_1=\fl(B-\hat t+(s+u)\eta).
\]
Enumerate $s=0,\pm1,\pm2,\pm3,\pm4$ and $u=0,\pm1/4,\pm1/2,\pm3/4,\pm1$, negative before positive within each pair. Retain only legal payloads with $b_0\ge0$ and reject at the first contradiction. At most 81 checks per candidate precede the unique-candidate rule and eight-pair limit in Section~\ref{sec:anchor-selection}.

The two-window controller uses $P\in\{16,32\}$, $\eta=P/2^{24}$, $s\in\{0,-1,1\}$, and $u\in\{-3/4,3/4\}$. The cached ablations freeze predictions before scoring; the two BF16 controllers return identical selected pairs, scores, and priorities. Their full replay counts and input hashes are retained in the artifact.

\noindent\textbf{Complete query costs.} The 34-window generation controller uses 155,390 calls on 64 GPT-Neo inputs; its median/maximum is 2,245/3,927. On 32 BF16 OPT inputs, generic and BF16-aware recovery use 40,486 and 14,048 calls, with medians 1,184.5 and 413. Generic Qwen3 recovery uses 33,915 calls on 32 inputs, with median 994.5. Native Qwen3 totals are 15,312/15,417/14,768 for vLLM/SGLang/llama.cpp across 128 inputs each. Mean costs decompose into 2.2/2.2/2.2 selection calls, 111.4/112.2/107.0 window-search calls, and 6.0/6.0/6.2 checks. Repeated payloads and retries are included; calibration and independent replay are excluded.

\noindent\textbf{Single-window ablation.} With eight-pair selection, $H=100$, and unchanged checks, $P=16$ yields 64/64/60 Qwen3 recoveries across vLLM/SGLang/llama.cpp; $P=32$ yields 37/36/36. SmolLM2 yields 1/64 and 64/64 per framework, respectively. Qwen3 on llama.cpp returns four incorrect candidates at $P=16$ and three at $P=32$; other failures are unresolved. All seven pass single-window checks outside the local domain but contradict the two-window transcript for the same pair and mask. All 163,242 unique queries match full-vocabulary replay.

\section{Recovery with Noisy Binary Responses}\label{app:noise}
These results condition on a correctly isolated token pair, a fixed score vector in the public interval, and the original arithmetic assumptions. Let $v\in\{0,1\}$ be the ideal answer to a logical query. The controller uses at most $M$ such queries on its correct path. Repetitions count as separate token responses. An undecided test stops recovery without a score estimate; the error bounds below concern returning an incorrect estimate, not the probability of abstention.

\subsection{Known Response-Error Bound}
Assume each response, conditional on the complete preceding history, disagrees with $v$ with probability at most $0<\bar p_{\mathrm{flip}}<1/2$. Set
\begin{equation}
c=\frac{1-\bar p_{\mathrm{flip}}}{\bar p_{\mathrm{flip}}},\qquad
h=\left\lceil\frac{\log(M/\delta)}{\log c}\right\rceil.
\end{equation}
For repeated responses $Y_j\in\{0,1\}$, accumulate $S_n=\sum_{j=1}^n(2Y_j-1)$. Return 1 at $S_n\ge h$ and 0 at $S_n\le-h$. If neither boundary is reached by the response cap, abstain. This is a sequential likelihood-ratio rule~\cite{wald1945sequential} with a threshold chosen for a nonasymptotic, recovery-wide guarantee.

\begin{proposition}[Error control for adaptive recovery]\label{prop:noisy}
Under these assumptions, substituting the sequential rule for every logical comparison gives probability at most $\delta$ of returning an incorrect score pair or tie-breaking priority.
\end{proposition}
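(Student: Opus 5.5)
The proof bounds the error of each logical comparison separately and then combines them with a union bound along the controller's correct path. The rule $S_n\ge h$ / $S_n\le-h$ is a symmetric random walk with absorbing barriers. By symmetry, take $v=1$; the case $v=0$ is identical after swapping labels. An error means the walk hits $-h$ before $+h$ within the cap. Abstention is not an error, so the cap only helps.

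\textbf{Step 1: a supermartingale.} Under the assumption, each response conditional on the history is correct with probability at least $1-\bar p_{\mathrm{flip}}$. I will define $Z_n=c^{-S_n}$. One step gives
\[
\mathbb E[Z_{n+1}\mid\mathcal F_n]=Z_n\bigl(\pi c^{-1}+(1-\pi)c\bigr),
\]
where $\pi\ge1-\bar p_{\mathrm{flip}}$ is the conditional probability of a correct response. The bracket is decreasing in $\pi$. At $\pi=1-\bar p_{\mathrm{flip}}$ it equals
\[
\bar p_{\mathrm{flip}}+(1-\bar p_{\mathrm{flip}})=1.
\]
So $Z_n$ is a nonnegative supermartingale with $Z_0=1$, and this holds even when the flip probabilities vary adaptively with the history. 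Ville's maximal inequality then gives
\[
\Pr\bigl(\exists n:\ S_n\le-h\bigr)=\Pr\bigl(\sup_n Z_n\ge c^{h}\bigr)\le c^{-h}.
\]
The choice of $h$ makes $c^{-h}\le\delta/M$. This step carries the main subtlety. Because responses are only conditionally bounded rather than i.i.d., the argument must use the supermartingale and Ville's inequality rather than a fixed-sample Chernoff bound.

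\textbf{Step 2: union bound along the correct path.} Let $E_m$ be the event that the $m$-th logical comparison, if reached along the path where every earlier comparison returned its ideal value, returns the wrong value. On the complement of $\bigcup_m E_m$, every comparison either returns its ideal answer or abstains. Ideal answers reproduce the noiseless transcript, so by Theorem~\ref{thm:recovery} (or Theorem~\ref{thm:negative} / Proposition~\ref{prop:mixed}) the controller returns exactly $(a,t,\tau)$, while abstention returns nothing. Hence an incorrect return requires some $E_m$.

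\textbf{Step 3: conditioning and summation.} The controller uses at most $M$ comparisons on the correct path. Conditioning on the history up to the start of comparison $m$, Step 1 applies to the fresh responses and gives $\Pr(E_m)\le\delta/M$. Summing over $m\le M$ yields the bound $\delta$.

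Once Step 1 is established, Steps 2 and 3 are bookkeeping. The care needed is to define $E_m$ only on the correct-path prefix, so that the count $M$ from the noiseless algorithm applies.
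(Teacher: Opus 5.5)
Your proposal is correct and follows the paper's proof essentially step for step. Both use the nonnegative supermartingale $c^{\mp S_n}$, whose one-step bound equals $1$ at $\pi=1-\bar p_{\mathrm{flip}}$, and the maximal inequality to get $c^{-h}\le\delta/M$ per comparison; both then sum over the at most $M$ possible first erroneous comparisons along the correct prefix, after which the deterministic recovery proof applies and capped abstentions return no estimate.
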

\begin{proof}
For $v=0$, the process $c^{S_n}$ is a nonnegative supermartingale starting at 1, since
\[
\mathbb E[c^{2Y_j-1}\mid\mathcal F_{j-1}]
\le\bar p_{\mathrm{flip}}c+(1-\bar p_{\mathrm{flip}})c^{-1}=1.
\]
The maximal inequality bounds the probability of reaching the wrong boundary by $c^{-h}\le\delta/M$. For $v=1$, apply the same argument to $c^{-S_n}$. Condition on the correct prefix of the adaptive controller and sum over the at most $M$ possible first erroneous comparisons. If none occurs, the deterministic recovery proof applies; stopping at the cap produces no incorrect estimate.
\end{proof}

For independent flips with actual probability $0<p_{\mathrm{flip}}<1/2$, the uncapped test has error probability and expected response count
\begin{equation}
\begin{aligned}
e_h(p_{\mathrm{flip}})&=\frac{1}{1+((1-p_{\mathrm{flip}})/p_{\mathrm{flip}})^h},\\
\mathbb E[N]&=\frac{h}{1-2p_{\mathrm{flip}}}(1-2e_h(p_{\mathrm{flip}})).
\end{aligned}
\end{equation}
The absorbing random walk on $\{-h,\ldots,h\}$ gives these expressions. At $M=50$, $\delta=0.01$, and $\bar p_{\mathrm{flip}}=p_{\mathrm{flip}}=0.1$, $h=4$ gives about five responses per comparison. Odd-$r$ independent majority voting has error at most $M\exp[-2r(1/2-\bar p_{\mathrm{flip}})^2]$; $r=27$ suffices, costing at most 1,350 responses.

\subsection{Unknown Response-Error Rate}
Suppose repetitions of a query are independent Bernoulli observations with fixed mean $p$. At sample sizes $n_j=2^j$, use
\begin{equation}
\begin{aligned}
\alpha_j&=\frac{\delta}{M j(j+1)},&
r_j&=\sqrt{\frac{\log(2/\alpha_j)}{2n_j}},\\
I_j&=[\widehat p_j-r_j,\widehat p_j+r_j]\cap[0,1].
\end{aligned}
\end{equation}
Hoeffding's inequality and $\sum_{j\ge1}\alpha_j=\delta/M$ give simultaneous coverage at all scheduled looks. Return 1 only when $\inf I_j>1/2$, or 0 only when $\sup I_j<1/2$. This implements the time-uniform coverage principle~\cite{howard2021timeuniform} through an elementary allocation of error probabilities. If the majority direction equals $v$ for each logical query, the same first-error argument bounds recovery error by $\delta$. Our implementation uses $j=1,\ldots,12$ and otherwise abstains. At a uniformly randomized token tie, $p=1/2$ provides no majority direction; increasing the cap cannot justify a fixed-priority conclusion.

\subsection{Random Token Tie Breaking and Changing Scores}
Assume fixed scores, independent uniform selection between tied maximizers, exact strict comparisons, and no third-token winner or tie. Repeat each query $r$ times, returning 1 only if all responses are 1, and returning 0 otherwise. This emulates token-0 priority, with exact strict comparisons and tie error probability $2^{-r}$. Thus $r=\lceil\log_2(M/\delta)\rceil$ bounds recovery error by $\delta$. For $M=50$ and $\delta=0.01$, $r=13$ gives at most 650 responses. The recovered priority is the chosen emulation convention, not an intrinsic priority of the randomized service.

These guarantees do not follow from a marginal disagreement rate alone. An error shared by all repetitions can make every response support the same wrong direction. Nor is an unspecified changing score vector equivalent to response flips: if observations use $q+\xi$, the representation $(q+c,\xi-c)$ gives identical observations for any fixed vector $c$. Identifying a particular unperturbed score therefore requires a restriction on the perturbation or a stable execution condition. The controlled batch experiment in Section~\ref{sec:noise} tests the latter condition.

\section{Logit Anchors and Output-Head Extraction}\label{app:head}
Following the factorization argument of Carlini et al.~\cite{carlini2024stealing}, let $W\in\R^{V\times d}$ be a bias-free linear output head and $X\in\R^{d\times n}$ its hidden states. Complete relative logits give $D=CWX$, with $C$ from~\eqref{eq:decomposition}. If $CW$ and $X$ have rank $d$, a rank factorization $D=\widehat B\widehat X$ satisfies $\widehat B=CWG$ and $\widehat X=G^{-1}X$ for an invertible $G$.

Write $\boldsymbol\mu=\mathbf1^\top WX/V\in\R^{1\times n}$ for the row of per-input mean logits. Each entry is determined by an original-logit anchor and its relative coordinates using~\eqref{eq:decomposition}. Then
\begin{equation}
\begin{aligned}
u&=\boldsymbol\mu\widehat X^\top(\widehat X\widehat X^\top)^{-1},\\
\widehat W&=\widehat B+\mathbf1u=WG.
\end{aligned}
\label{eq:head-completion}
\end{equation}
Since $\boldsymbol\mu=(\mathbf1^\top WG/V)\widehat X$ and $\widehat X$ has full row rank, $u=\mathbf1^\top WG/V$. The identity $C+\mathbf1\mathbf1^\top/V=I$ then gives $\widehat W=WG$. Anchors may vary by input: each determines its corresponding mean. If centering reduces rank, factor the completed matrix $WX=D+\mathbf1\boldsymbol\mu$ instead.

These identities assume exact linear outputs; floating-point residuals, relative-coordinate errors, and conditioning affect their numerical accuracy. For held-out inputs, centered coordinates determine hidden coordinates and $u$ predicts the mean without their anchors.

\noindent\textbf{Token-only numerical reconstruction.}
For each non-winning token, bias bisection retains losing and winning endpoints, stopping at width $2^{-21}$ or adjacent binary32 values. The negative midpoint estimates its score relative to the unbiased winner. Rounding contributes additional error. If a token cannot be reached within the bias limit, we mark the input incomplete. The artifact includes per-coordinate rounding bounds and independent checks.

\noindent\textbf{Numerical diagnostics and evaluation cost.}
Across input sets 1--3, the rank-64 spectral gaps are $3.09\times10^4$, $1.07\times10^5$, and $4.05\times10^4$; relative singular-value thresholds from $10^{-4}$ to $10^{-7}$ all retain rank 64. Effective condition numbers are 1,340, 441, and 1,048. Recomputing private $Wh$ in FP64 gives RMSE $1.40\times10^{-6}$--$1.45\times10^{-6}$ against the recorded FP32 logits, a numerical reference for the prediction errors. The held-out mean standard deviations are 0.831, 1.038, and 1.132; the constant-mean baseline uses only recovered extraction means. Held-out evaluation, including its independently scored anchors, adds 228,475/456,905/228,494 decisions and 7,564/15,858/7,500 HTTP requests. Those queries are excluded from the extraction costs in Table~\ref{tab:head-extraction}.

\section{Reproducibility Details}\label{app:repro}

\noindent\textbf{Paired bias-control experiment.}
The behavior-preservation study uses input indices $0,4,8,12,16,21,25,29,33,37,42,46,50,54,58,63$ from each serving configuration and PCG64 seed 2026101201. Random controls choose eight distinct tokens from the complete vocabulary or top 64, then draw independent biases uniformly from $[-A,A]$, for $A=0.5,2,10$. Top tokens are ordered by decreasing original score, then increasing ID. For the two leading tokens $A,T$, set $d=q_A-q_T$ and $U=\operatorname{nextafter}_{32}(m,+\infty)-m$, where $m=\max(|q_A|,|q_T|)$. The targeted controls set only $b_T=\mathrm{RN}_{32}(d+kU/8)$, $k=-16,\ldots,16$. All 12,384 payloads are legal under $H=100$ and distinct within each input. The two tie-breaking rules are fixed offline comparison conventions. Complete probabilities use CPU FP32 softmax, with $L_1$ differences accumulated in FP64.

Among threshold-adjacent controls, the original rounded top-two margin is zero in 834 cases, at most one ULP in 1,404 positive-margin cases, and between one and four ULPs in 930 cases. All 369 greedy changes occur in the zero-margin group. Random controls have 299 zero-margin cases and 8,917 cases above 16 ULPs; neither group changes.

\noindent\textbf{Serving API configuration.}
The client sends HTTP requests to each server's generation endpoint.
vLLM: \texttt{/v1/completions}, \texttt{logit\_bias}, \texttt{choices[0].token\_ids}; SGLang: \texttt{/generate}, \texttt{sampling\_params.logit\_bias}, \texttt{output\_ids}; llama.cpp: \texttt{/completion}, \texttt{logit\_bias}, \texttt{tokens}. Each triple gives the endpoint, bias field, and returned IDs. Decimal token IDs map to exact binary32 bias values within $[-100,100]$; probabilities are disabled. The artifact specifies payloads, launch flags, source hashes, and intervals.

\noindent\textbf{Artifact availability.}
The artifact is not publicly available at the time of this preprint. We plan to release these materials in a future revision.

\end{document}